\definecolor{lightgray}{rgb}{0.9,0.9,0.9}
\spnewtheorem{thm}[theorem]{Theorem}{\bfseries}{\itshape}
\spnewtheorem{cor}[theorem]{Corollary}{\bfseries}{\itshape}
\spnewtheorem{cnj}[theorem]{Conjecture}{\bfseries}{\itshape}
\spnewtheorem{lem}[theorem]{Lemma}{\bfseries}{\itshape}
\spnewtheorem{lemdefn}[theorem]{Lemma and Definition}{\bfseries}{\itshape}
\spnewtheorem{prop}[theorem]{Proposition}{\bfseries}{\itshape}
\spnewtheorem{defn}[theorem]{Definition}{\bfseries}{\upshape}
\spnewtheorem{rem}[theorem]{Remark}{\bfseries}{\upshape}
\spnewtheorem{notation}[theorem]{Notation}{\bfseries}{\upshape}
\spnewtheorem{expl}[theorem]{Example}{\bfseries}{\upshape}
\spnewtheorem{thmdefn}[theorem]{Theorem and Definition}{\bfseries}{\itshape}
\spnewtheorem{propdefn}[theorem]{Proposition and Definition}{\bfseries}{\itshape}
\spnewtheorem{assn}[theorem]{Assumption}{\bfseries}{\upshape}
\spnewtheorem{algorithm}[theorem]{Algorithm}{\bfseries}{\upshape}
 \renewenvironment{corollary}{\begin{cor}}{\end{cor}}
 \renewenvironment{lemma}{\begin{lem}}{\end{lem}}
\newcommand{\sem}[1]{\llbracket #1 \rrbracket}
\newcommand{\Agents}[0]{\ensuremath{\mathsf{Ag}}}
\newcommand{\agent}[0]{\ensuremath{a}}
\newcommand{\Atoms}[0]{\ensuremath{\mathsf{At}}}
\newcommand{\Vars}[0]{\ensuremath{\mathsf{V}}}
\newcommand{\CLbox}[1]{\ensuremath{[#1]}}
\newcommand{\CLdia}[1]{\ensuremath{\langle#1\rangle}}
\newcommand{\Pow}[0]{\ensuremath{\mathcal{P}}}
\newcommand{\G}[0]{\ensuremath{\mathcal{G}}}
\newcommand{\gamecgf}{\G^{\mathsf{CGF}}}
\newcommand{\gameef}{\G^{\mathsf{EF}}}
\newcommand{\mycomment}[1]{}
\crefname{defn}{Definition}{Definition}
\begin{document}
\title{Efficient
  Model Checking\\ for the Alternating-Time $\mu$-Calculus\\ via
  Effectivity Frames}
\titlerunning{Efficient Model Checking for the Alternating-Time $\mu$-Calculus}
%
\author{Daniel Hausmann$^1$\thanks{Supported
		by the ERC Consolidator grant
		D-SynMA (No. 772459) and by the EPSRC through grant EP/Z003121/1} \and
 Merlin Humml$^2$\thanks{Funded by the German Federal Chamber of Notaries, project DIREGA} \and
 Simon Prucker$^2$\thanks{Funded by the Deutsche Forschungsgemeinschaft (DFG, German
   Research Foundation) – project number 517924115}
 \and
 Lutz Schröder$^2$\thanks{Funded by the Deutsche Forschungsgemeinschaft (DFG, German
   Research Foundation) – project number 419850228}
}
 \authorrunning{D.\ Hausmann, M.\ Humml, S.\ Prucker, L.\ Schröder}
%
\institute{$^1$ University of Liverpool, United Kingdom\\
$^2$ Friedrich-Alexander-Universität Erlangen-Nürnberg, Germany}

\maketitle              
\vspace{-10pt}
\begin{abstract}
  The semantics of alternating-time temporal logic (ATL) and the more
  expressive alternating-time $\mu$-calculus (AMC) is standardly given
  in terms of concurrent game frames (CGF). The information required
  to interpret AMC formulas is equivalently represented in terms of
  effectivity frames in the sense of Pauly; in many cases, this
  representation is more compact than the corresponding CGF, and in
  principle allows for faster evaluation of coalitional modalities. In
  the present work, we investigate whether implementing a model
  checker based on effectivity frames leads to better performance in
  practice.  We implement the translation from concurrent game frames
  to effectivity frames and analyse performance gains in model
  checking based on corresponding instantiations of a generic model
  checker for coalgebraic $\mu$-calculi, using dedicated benchmark
  series as well as random systems and formulas. In the process, we
  also compare performance to the state-of-the-art ATL model checker
  MCMAS. Our results indicate that on large systems, the overhead
  involved in converting a CGF to an effectivity frame is often
  outweighed by the benefits in subsequent model checking.

  \keywords{Model checking \and multi-agent systems \and
    alternating-time temporal logic \and concurrent game frames}
\end{abstract}
\section{Introduction}\label{sec:intro}

\noindent Alternating-time temporal logic (ATL) and its extension with
full fixpoints, the alternating-time $\mu$-calculus (AMC) play an
established role as a core formalism for the specification of
multi-agent systems~\cite{AlurEA02}. They allow expressing the
ability of groups of agents (\emph{coalitions}) to achieve short-term
and long-term goals by coordinated action. The semantics of ATL and
the AMC is standardly defined in terms of \emph{concurrent game
  frames} (CGFs), which at each state assign to each agent an explicit
set of available moves, along with an outcome function
determining which successor state is reached once all agents have
picked a move. It has been shown that for purposes of evaluating AMC
formulas, one may equivalently convert a CGF into an \emph{effectivity
  frame}, which at each state records, for each coalition~$C$, a set
of sets of states, understood as consisting of those state properties
that~$C$ may (alternatively) enforce by joint action in the next
state~\cite{Pauly02}.

Like in the case of standard temporal logics, one of the key
verification tasks in this context is \emph{model checking}, i.e.\ to
determine whether a given state in a multi-agent system, say a CGF,
satisfies a given formula of ATL or the AMC. The most natural way to
present the input system for purposes of ATL or AMC model checking is,
presumably, indeed to specify a CGF, which makes the actions available
to the agents and their effects explicit. On the other hand, it is
immediate from inspection of the respective semantics that evaluation
of alternating-time modalities, which over CGFs amounts to the
evaluation of nested quantifiers over actions (``coalition~$C$ has a
joint action such that no matter what the other agents do, the target
formula holds'') and hence takes exponential time in the number of
agents, is computationally more efficient on effectivity frames, where
it reduces to a simple look-up operation.

In the present work, we explore whether this observation can be
translated into actual efficiency gains in model checking. That is, we
trade the computational cost of evaluating coalitional modalities on
CGFs for an additional preprocessing step in which we translate a CGF
into an effectivity frame. The computations involved in the
translation are, prima facie, similar to the ones involved in
evaluating coalitional modalities on CGFs, but they are performed only
once, while especially in the AMC, modalities may be evaluated
repeatedly at the same state. We provide an implementation of AMC
model checking both on CGFs and on effectivity frames, building on a
generic implementation of model checking algorithms for coalgebraic
$\mu$-calculi~\cite{DBLP:conf/concur/HausmannS19} within the
\emph{Coalgebraic Ontology Logic Solver
  (COOL)}~\cite{DBLP:conf/cade/GorinPSWW14,COOL2,COOLMC}. Thereby, we incidentally
provide, as far as we are aware, the first model checker that covers
the entire AMC (which contains ATL$^*$ as a
fragment~\cite[Theorem~6.1]{AlurEA02}), rather than only ATL; furthermore,
our tool is agnostic as to whether the input multi-agent system
is provided in the form of a CGF or as an effectivity frame and hence enables  working directly with effectivity frames. We run
extensive experiments both on random models and formulas and on
dedicated benchmark series; besides comparing between CGF-based and
effectivity-frame-based model checking, we also compare the
performance of our implementation with the state-of-the-art ATL model
checker MCMAS~\cite{DBLP:journals/sttt/LomuscioQR17}. As expected,
results vary strongly with the exact nature of benchmark series, in
particular in the comparison with MCMAS; notably, MCMAS implements
symbolic model checking while COOL offers only explicit-state model
checking, and correspondingly MCMAS performs better on problems that
can be succinctly represented symbolically while COOL plays out
advantages when this is not the case.

\paragraph*{Related Work} As indicated above, the notion of
effectivity frame semantics was introduced by
Pauly~\cite{Pauly02}. Goranko and Jamroga~\cite{Goranko_2004} provide
a comparison of ATL semantics over CGFs, effectivity frames, and
\emph{alternating transition systems} (ATSs), in which the moves of
the agents are subsets of the state space. The translation from CGFs
to ATSs involves a blowup of the state space, so we refrain from
including ATS semantics in the present comparison. As far as we are
aware, the first implemented model checker for ATL was
MOCHA~\cite{DBLP:conf/icse/AlurAGHKKMMW01}, which however seems to be
no longer available. Kański et al.~\cite{unboundedatl2021} compare
MCMAS to an ATL model checker UMC4ATL based on McMillan's method of
\emph{unbounded model checking}~\cite{DBLP:conf/cav/McMillan02}. The
results indicate better performance of MCMAS, on which we therefore
focus in our comparative evaluation. The complexity of model checking
ATL and related logics has been investigated extensively already at
the time of introduction~\cite{AlurEA02}; we additionally mention work
on symbolic model checking for~ATL~\cite{DBLP:conf/atal/HoekLW06} and
on the complexity of explicit-state model checking for
ATL$^+$~\cite{DBLP:journals/iandc/GorankoKR21}. 
The AMC (hence also ATL*) is subsumed by the more expressive
strategy logic (SL)~\cite{CHATTERJEE2010677}, which allows the specification
of strategies that fail to be $\omega$-regular, 
but has nonelementary model checking~\cite{DBLP:journals/tocl/MogaveroMPV14};
restricted fragments of SL with more favourable model checking complexities
have been considered~\cite{DBLP:conf/cav/CermakLMM14,DBLP:conf/ijcai/BelardinelliJKM19}
 (and implemented in MCMAS-SLK~\cite{DBLP:conf/cav/CermakLMM14}). 

\section{The Alternating-Time $\mu$-Calculus}\label{sec:amc}

\noindent We briefly recall the syntax and semantics of the
\emph{alternating-time $\mu$-calculus}, a fixpoint logic
that allows the specification of $\omega$-regular joint strategies of coalitions
in multi-agent systems~\cite{AlurEA02}. Its key
feature are \emph{coalitional modalities} $\CLbox{C}$ `coalition~$C$
of agents is able to enforce'; temporal idioms over this base are
expressed using least and greatest fixpoint operators. The alternating-time $\mu$-calculus
contains \emph{alternating-time temporal logic (ATL)} as a fragment,
embedded in essentially the same way as computational tree logic
embeds into the standard $\mu$-calculus.

\paragraph{Syntax}
Formulas of the \emph{alternating-time temporal $\mu$-calculus~(AMC)} are given by the following grammar, depending on (countable) sets
\Agents{}, \Atoms{} and \Vars{} of \emph{agents}, \emph{propositional atoms} and \emph{fixpoint variables}, respectively.
\begin{gather*}
    \varphi, \psi ::= \top \mid \bot \mid p \mid \neg p\mid \varphi \land \psi 
    \mid \varphi \lor \psi \mid \CLbox{C}\varphi \mid \CLdia{C}\varphi\mid X\mid
    \eta X.\,\varphi 
\end{gather*}
where $p \in \Atoms$, $X\in\Vars$, $\eta\in \{\mu,\nu\}$, and~$C$ ranges over
\emph{coalitions}, i.e.\  subsets of $\Agents$.
\emph{Coalitional modalities} $\CLbox{C}\varphi$ (for which we employ
notation as in \emph{coalition logic}~\cite{Pauly02}) intuitively
express that the agents in coalition $C$ have a joint strategy to
enforce $\varphi$ in the next step of the game; dually,
$\CLdia{C}\varphi$ expresses that coalition~$C$ cannot prevent that
$\varphi$ is satisfied in the next step.  The fixpoint operators give
rise to standard notions of \emph{bound} and \emph{free} (occurrences of)
fixpoint variables; a formula is \emph{closed} if it does not contain
 free fixpoint variables, and \emph{clean} if every
fixpoint variable in it is bound by at most one fixpoint operator.
Given a closed formula $\varphi$, we let $|\varphi|$ denote its
syntactic size. The algorithms and implementations in this paper work
on the \emph{closure} $\mathsf{cl}(\varphi)$ of $\varphi$. The closure
is a succinct graph representation of the respective formula,
intuitively obtained from its syntax tree by identifying occurrences
of fixpoint variables with their binding fixpoint operators; we have
$|\mathsf{cl}(\varphi)|\leq |\varphi|$.  Finally, we define the
\emph{alternation-depth} $\mathsf{ad}(\varphi)$ of fixpoint formulas
$\varphi=\eta X.\,\psi$ in the usual way, capturing the number of
alternations between least and greatest fixpoints; for a detailed
account of these syntactic notions, see~\cite{KupkeMV22}.

As mentioned in \cref{sec:intro}, the AMC has various equivalent
semantics; we recall the semantics based on \emph{concurrent game
  frames}~\cite{AlurEA02} and that based on \emph{effectivity
  frames}~\cite{Pauly02}.  Let $W$ be a non-empty finite set of states. A
model $\mathcal{M} = (\mathcal{F}, \rho)$ consists of a \emph{valuation}
$\rho: \Atoms \to \mathcal{P}(W)$ that assigns to each propositional atom
$a\in\Atoms$ the set $\rho(a)$ of states that satisfy $a$; $\mathcal{F}$
is either a concurrent game frame (over $W$) or an effectivity frame
(over $W$), introduced next.

\paragraph{Concurrent game frame semantics:}
For a coalition~$C$, we write $\Pi_C=\mathbb{N}^{|C|}$ to denote the
set of \emph{joint (one-step) strategies} (or \emph{joint moves})
of~$C$; we denote the set $\Pi_\Agents$
of \emph{complete (one-step) strategies} (or \emph{grand moves}) by just $\Pi$. Given a joint strategy
$s_C\in \Pi_C$ of~$C$, any joint strategy
$s_{\overline{C}}\in \Pi_{\overline{C}}$ of the
counter coalition $\overline{C}=\Agents\setminus C$ induces a
complete strategy $(s_C,s_{\overline{C}})\in\Pi$. In this notation, a
\emph{concurrent game frame (CGF)}
    \begin{align*}
    \mathcal{F} = (W,m:W\times\Agents\to\mathbb{N}, f:W\times\Pi\rightharpoonup W)
    \end{align*}
    consists of $W$ together with functions
    \begin{itemize}
    \item $m:W\times\Agents\to\mathbb{N}$, assigning to each state
      $w\in W$ and each agent $a\in\Agents$ the number $m(w,a)>0$ of
      moves available to agent $a$ at the state $w$; and
    \item $f:W\times\Pi\rightharpoonup W$, assigning to each state
      $w\in W$ and each complete strategy
      $s=(s_1,\ldots,s_{|\Agents|})\in\Pi$ that is \emph{admissible} at $w$,
      i.e.\ $s_i< m(w,i)$ for all $i\in \Agents$, an \emph{outcome} state
      $f(w,s)\in W$.
    \end{itemize}
    Satisfaction $w,\sigma\models\varphi$ of an AMC formula
    $\varphi$ in a state~$w$ of a CGF $\mathcal{F}=(W,m,f)$ is defined
    inductively, depending on a \emph{valuation}
    $\sigma\colon\Vars\to\Pow(W)$ of the fixpoint variables; we denote
    the \emph{extension} of $\varphi$ under $\sigma$ by
    $\sem{\varphi}_\sigma=\{w\in W\mid w,\sigma\models \varphi\}$. For
    a coalition~$C$ and a state $w\in W$, we let $\Pi^w_C$ denote the
    set of joint strategies of~$C$ that are admissible at $w$.  The
    semantic clauses for propositional atoms and connectives are as usual
    (the former is given by the valuation $\rho$ of atoms);
    modalities are interpreted as
    \begin{align*}
     \sem{\CLbox{C}\varphi}_\sigma&=\{w\in W\mid \exists s_C\in \Pi^w_C.\,\forall s_{\overline{C}}\in
     \Pi^w_{\overline{C}}.\,f(w,(s_C,s_{{\overline{C}}}))\in\sem{\varphi}_\sigma\}\\
     \sem{\CLdia{C}\varphi}_\sigma&=\{w\in W\mid \forall s_C\in \Pi^w_C.\,\exists s_{\overline{C}}\in
     \Pi^w_{\overline{C}}.\,f(w,(s_C,s_{{\overline{C}}}))\in\sem{\varphi}_\sigma\}
    \end{align*}
    (where we write $(s_C,s_{\overline C})$ for the complete strategy extending~$s_C$ and~$s_{\overline C}$) so that, e.g., $w,\sigma\models\CLbox{C}\varphi$ if and only if coalition~$C$ has a joint
    strategy~$s_C$ at~$w$ such that for all strategies that complete~$s_C$ and are admissible
    at~$w$, the outcome satisfies~$\varphi$ under the valuation~$\sigma$ of fixpoint variables;
    and the semantics of fixpoint operators is given (exploiting the Knaster-Tarski fixpoint theorem) by
 	\begin{align*}
     \sem{\mu X.\,\varphi}_\sigma  & =
     \textstyle\bigcap \{Z\subseteq W\mid \sem{\varphi}^X_\sigma (Z)\subseteq Z\} &
     \sem{\nu X.\,\varphi}_\sigma &=
     \textstyle\bigcup \{Z\subseteq W\mid Z\subseteq\sem{\varphi}^X_\sigma (Z)\},
    \end{align*}
    where $\sem{\varphi}^X_\sigma$ is the function defined by
    $\sem{\varphi}^X_\sigma(A)= \sem{\varphi}_{\sigma[X\mapsto A]}$
    for $A\subseteq W$, where $\sigma[X\mapsto A](X)=A$ and
    $\sigma[X\mapsto A](Y)=\sigma(Y)$ for $Y\neq X$.  For a
    \emph{closed} formula~$\varphi$ (in which all fixpoint variables
    $X$ occur in the scope of some binding fixpoint operator $\mu X$
    or $\nu X$), $\sem{\varphi}_\sigma$ does not depend on $\sigma$,
    so in this case we just write $\sem{\varphi}$ (or $\sem{\varphi}_{\mathcal{F}}$
    to make the semantic domain explicit) for $\sem{\varphi}_\sigma$.
	
\paragraph{Effectivity frame semantics:}
    An \emph{effectivity frame (EF)} 
    \begin{align*}
    \mathcal{F} = (W, e\colon W \times \mathcal{P}(\Agents) \to \mathcal{P}(\mathcal{P}(W)))
    \end{align*}    
    consists of $W$ together with an effectivity function $e$ that
    assigns, to each state $w\in W$ and each coalition
    $C\subseteq\Agents$, a set $e(w,C)$ of sets of states.
    Intuitively, we will have $U\in e(w,C)$ whenever coalition $C$ has
    a joint strategy $s_C$ at $w$ that guarantees that the next state
    in the game is contained in $U$. To support this intuitive
    meaning, effectivity frames need to be restricted to be
    \emph{playable}, ensuring for finite~$W$ that they are induced by
    a CGF~\cite{Pauly02,Goranko_2004,DBLP:journals/lmcs/LitakPSS18}.
    When constructing effectivity frames from CGFs, playability is automatic.
           
    Satisfaction of AMC formulas over an effectivity frame
    $\mathcal{F}=(W,e)$ is defined inductively by the same clauses for
    propositional and fixpoint operators as for CGFs, but 
    modalities are interpreted as
    \begin{align*}
	 \sem{\CLbox{C}\varphi}_\sigma &= \{ w\in W\mid 
     \exists U\in e(w,C).\, U\subseteq\sem{\varphi}_\sigma\}\\
	 \sem{\CLdia{C}\varphi}_\sigma &= \{ w\in W\mid 
     \forall U\in e(w,C).\, U\cap\sem{\varphi}_\sigma\neq \emptyset\}
    \end{align*}
 so that with this semantics, $w,\sigma\models\CLbox{C}\varphi$ if and only if
 coalition $C$ can enforce some set $U$ at state $w$ such that all states contained
 in $U$ satisfy $\varphi$ (under the valuation $\sigma$ of fixpoint variables).
 The difference between the two semantics is displayed in \cref{fig:sem}, which illustrates the encoding of a single game step in both representations.
    \begin{figure}[ht]
      \tikzset{world/.style={draw, circle},nei/.style={draw, rounded corners, inner sep = 1em}}
        \centering
        \begin{scriptsize}
        \begin{tikzpicture}
          \node[world] (w1) at (0,1) {\(w_1\)};
          \node[world, label={right: \(p\)}] (w2) at (-.85,2) {\(w_2\)};
          \node[world, label={left: \(q\)}] (w3) at (.85,2) {\(w_3\)};

          \draw[-Latex, bend left] (w1) to node[below left, align=center] {\((1,2,1)\),\((2,2,1)\),\\\((2,1,1)\),\((1,1,1)\),\\\((1,1,2)\)} (w2);
          \draw[-Latex, bend right] (w1) to node[below right, align=center] {\((1,2,2)\),\((2,2,2)\),\\\((2,1,2)\)} (w3);

			\node[world] (w1a) at (6,0) {\(w_1\)};
			\node[world, label={right: \(p\)}] (w2a) at (-.85+6,2.5) {\(w_2\)};
			\node[world, label={left: \(q\)}] (w3a) at (.85+6,2.5) {\(w_3\)};
		
			\node [nei, fit=(w2a)] (n2a) {};
			\node [nei, fit=(w3a)] (n3a) {};
			\node [nei, fit=(n2a)(n3a)] (n23a) {};
		
			\draw[-Latex, bend left] (w1a) to node [right,align=center] {\(\{1\}\),\(\{2\}\),\\\(\{3\}\)} (n23a);
			\draw[-Latex, bend left=40] (w1a) to node [below left, align=center] {\(\{1,2\}\),\(\{3\}\)} (n2a);
			\draw[-Latex, bend right=40] (w1a) to node [below right, align=center] {\(\{2,3\}\),\(\{1,3\}\)} (n3a);
	\end{tikzpicture}
\end{scriptsize}

        \caption{Concurrent game frame semantics vs. effectivity frame semantics} \label{fig:sem}
\end{figure}
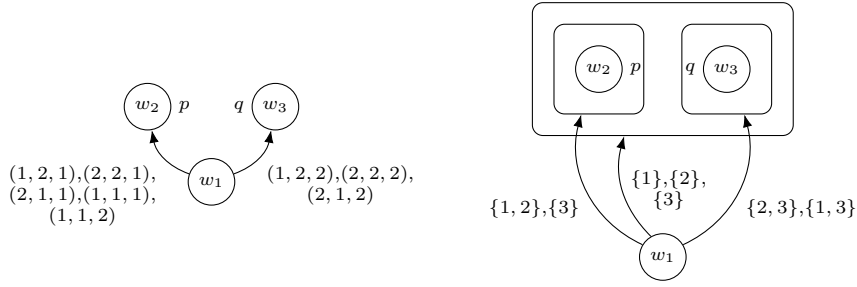
 In the CGF encoding on the left hand side, we have, e.g., $w_1\models\CLbox{\{1,3\}} q$ since, for the joint strategy that agents~$1$ and~$3$
both pick their second move, the outcome  is always~$w_3$, no matter which move agent~$2$ picks.
This is reflected in the EF encoding by $\{1,3\}$ being able to force the singleton set $\{w_3\}$.

Concurrent game frames can be  transformed into equivalent effectivity frames~\cite{Pauly02};
explicitly, the conversion works as follows.
 
\begin{defn}[Induced effectivity frame]\upshape\label{def:translation}
  A CGF $\mathcal{F}=(W,m,f)$ \emph{induces} the effectivity frame
  $\mathcal{F}'=(W,e)$ defined by
  \begin{align*}
e(w,C)=\{\{f(w,s_C,s_{\overline{C}})\mid s_{\overline{C}}\in \Pi^w_{\overline{C}}\}\mid s_C\in \Pi^w_C\}
\end{align*}
for $w\in W$ and $C\subseteq\Agents$.
\end{defn}

\begin{lemma}\label{lem:correctness}
  Let $\mathcal{F}$ be a concurrent game frame and let $\mathcal{F}'$
  be the effectivity frame induced by $\mathcal{F}$, and let~$\sigma$
  be a valuation. Then for all formulas $\varphi$,
  \begin{equation*}
    \sem{\varphi}_{\mathcal{F},\sigma}=\sem{\varphi}_{\mathcal{F'},\sigma}.
  \end{equation*}
\end{lemma}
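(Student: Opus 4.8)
The plan is to proceed by structural induction on the formula $\varphi$, exploiting that the two semantics share the same state space $W$, the same valuation $\rho$ of atoms, and the same valuation $\sigma$ of fixpoint variables, and that the semantic clauses coincide for every syntactic construct \emph{except} the coalitional modalities $\CLbox{C}$ and $\CLdia{C}$. To make the fixpoint case go through, I would strengthen the statement so that it quantifies over all valuations $\sigma$ simultaneously, which is what allows the induction hypothesis to be applied under the modified valuations $\sigma[X\mapsto A]$.

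The base cases ($\top$, $\bot$, $p$, $\neg p$, and fixpoint variables $X$) are immediate, since the relevant clauses depend only on the shared data $W$, $\rho$, and $\sigma$; the Boolean cases $\varphi\land\psi$ and $\varphi\lor\psi$ follow directly from the induction hypothesis. For the fixpoint cases $\eta X.\,\varphi$, the induction hypothesis, applied for every $A\subseteq W$ to the valuation $\sigma[X\mapsto A]$, shows that the two monotone operators $\sem{\varphi}^X_{\mathcal{F},\sigma}$ and $\sem{\varphi}^X_{\mathcal{F}',\sigma}$ on $\Pow(W)$ coincide; since both frames compute the fixpoint by the same Knaster-Tarski expression from this common operator, their extensions agree.

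The only genuinely interesting step, and hence the main (mild) obstacle, is to reconcile the two modal clauses using \cref{def:translation}. Writing $U_{s_C}:=\{f(w,(s_C,s_{\overline{C}}))\mid s_{\overline{C}}\in\Pi^w_{\overline{C}}\}$ for a joint strategy $s_C\in\Pi^w_C$, the definition gives $e(w,C)=\{U_{s_C}\mid s_C\in\Pi^w_C\}$. By the induction hypothesis, $\sem{\varphi}_{\mathcal{F},\sigma}=\sem{\varphi}_{\mathcal{F}',\sigma}$; call this set $S$. For the box, one observes that $U_{s_C}\subseteq S$ holds exactly when $f(w,(s_C,s_{\overline{C}}))\in S$ for every $s_{\overline{C}}\in\Pi^w_{\overline{C}}$, so that $\exists U\in e(w,C).\,U\subseteq S$ is equivalent to $\exists s_C\in\Pi^w_C.\,\forall s_{\overline{C}}\in\Pi^w_{\overline{C}}.\,f(w,(s_C,s_{\overline{C}}))\in S$, which is precisely the CGF box clause. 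Dually, $U_{s_C}\cap S\neq\emptyset$ means $\exists s_{\overline{C}}\in\Pi^w_{\overline{C}}.\,f(w,(s_C,s_{\overline{C}}))\in S$, matching the diamond.

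The one point requiring care is that $e(w,C)$ is a \emph{set} of sets, so distinct strategies $s_C$ may collapse to the same $U_{s_C}$; this is harmless, since the quantifier over $U\in e(w,C)$ ranges over the image $\{U_{s_C}\mid s_C\in\Pi^w_C\}$ and the quantified property depends only on the set $U_{s_C}$, so quantifying over strategies and over their image sets yields the same truth value. Admissibility likewise causes no mismatch, as \cref{def:translation} builds $e(w,C)$ from exactly the admissible strategies in $\Pi^w_C$ and $\Pi^w_{\overline{C}}$, and since each $m(w,a)>0$ these sets are nonempty, so no degenerate empty-product cases arise.
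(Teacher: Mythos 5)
Your proof is correct and follows essentially the same route as the paper's: a structural induction whose only nontrivial step is the coalitional-modality case, resolved by rewriting the nested strategy quantifiers of the CGF clause into quantification over the image sets $U_{s_C}$ that constitute $e(w,C)$ --- exactly the chain of equalities in the paper's own proof for the box modality. The paper merely compresses the routine parts (propositional, fixpoint, and dual diamond cases) into the phrase ``follows immediately from the equi-satisfaction of modalities,'' whereas you spell them out, including the harmless collapse of distinct strategies onto equal sets; there is no substantive difference in approach.
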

\begin{proof}
The claim follows immediately from the equi-satisfaction of modalities in $\mathcal{F}$
and $\mathcal{F}'$. For readability,
we show this for closed modal formulas $\CLbox{C}\psi$. 
Assuming that $\sem{\psi}_{\mathcal{F}}=\sem{\psi}_{\mathcal{F}'}$, we have
\begin{align*}
	\sem{\CLbox{C}\psi}_{\mathcal{F}} & =
	\{w\in W\mid \exists s_C\in \Pi_C^w.\,
	\forall s_{\overline{C}}\in \Pi_{\overline{C}}^w.\,
	f(w,(s_C,s_{\overline{C}}))\in\sem{\psi}_{\mathcal{F}}\}\\
	&=
	\{w\in W\mid \exists s_C\in \Pi_C^w.\,\forall u\in \{f(w,s_C,s_{\overline{C}})\mid s_{\overline{C}}\in \Pi_{\overline{C}}^w\}.\,u\in\sem{\psi}_{\mathcal{F}'}\}\\
	&=
	\{w\in W\mid \exists U\in \{\{f(w,s_C,s_{\overline{C}})\mid s_{\overline{C}}\in \Pi^w_{\overline{C}}\}\mid s_C\in \Pi^w_C\}.\,U\subseteq\sem{\psi}_{\mathcal{F'}}\}\\
	&=
	\{w\in W\mid \exists U\in e(w,C).\,
	U\subseteq\sem{\psi}_{\mathcal{F}'}\}\\
	&=\sem{\CLbox{C}\psi}_{\mathcal{F}'},
\end{align*}
where the second equivalence holds
by assumption.\qed
\end{proof}

\section{Model Checking for the AMC}\label{sec:mc}

The model checking problem for the AMC consists in deciding, for state
$w$ in a model given either as a CGF or as an effectivity frame, and a
closed AMC formula $\varphi$, whether $w\in\sem{\varphi}$. The problem
is known to be in $\textsc{NP}\cap\textsc{co-NP}$ for the AMC, and in
$\textsc{PTime}$ for ATL.  A standard way to decide the problem is by
reduction to parity games, enabling the use of recent
quasipolynomial game-solving algorithms for model checking in the AMC.

\paragraph{Parity Games:}
A \emph{parity game} \(\mathcal{G} = (V,V_{\exists}, E,\Omega)\) is an
infinite-duration two-player game, played by the players~$\exists$ and~$\forall$
(\emph{Eloise} and \emph{Abelard}). It consists of a set~$V$ of
positions, with positions $V_{\exists}\subseteq V$ owned by~$\exists$
and the others by~$\forall$, a \emph{move} relation
$E\subseteq V\times V$, and a priority function
\(\Omega\colon V\to \mathbb{N}\).  A \emph{play} is a path in the
directed graph $(V,E)$ that is either infinite or ends in a node
$v\in V$ with no outgoing moves. Finite plays \(v_0v_1\dots v_n\) are
won by $\exists$ if and only if \(v_n\in V_{\forall}\) (i.e.\ if
$\forall$ is stuck); infinite plays are won by $\exists$ if and only
if
\(\max\{p\mid \forall j\in\mathbb{N}.\,\exists k\geq j.\, \Omega(v_k) =
p\}\) is even.  A \emph{(history-free)} $\exists$-\emph{strategy} is a
partial function \(s\colon V_\exists\rightharpoonup V\) that assigns
moves to $\exists$-nodes.  A play \emph{follows} a strategy \(s\) if
for all \(i\geq 0\) such that \(v_i\in V_{\exists}\),
\(v_{i+1}=s(v_i)\)\spnote{@Daniel Müsste heißen, v_i+1 = s(v_i) wenn s(v_i) definiert und ansonsten muss es ein endliches play sein, und v_i = v_n, dann kann s(v_i) undefiniert sein.}.
An $\exists$-strategy \emph{wins} a node
\(v\in V\) if $\exists$ wins all plays that start at \(v\) and follow
\(s\).  

We fix a closed formula $\varphi$ for the remainder of this section
and let $\mathsf{cl}=\mathsf{cl}(\varphi)$ denote its (Fischer-Ladner)
\emph{closure}~\cite{Kozen83}\spnote{Ich denke das ist die gleiche closure, die erneute Erwähnung sollte also nicht kursiv sein und das Zitat an der ersten Erwähnung.}.

\begin{defn}[Model checking games, CGF semantics]
Given a model~$\mathcal{M}$ consisting of 
a concurrent game frame $\mathcal{F}=(W,m,f)$ and a valuation $\rho$,
the \emph{model checking game} 
$\gamecgf_{\mathcal{M},\varphi}=(V,E,\Omega)$ is
the parity game defined by the following table, where
game nodes $v\in V=V_\exists\cup V_\forall$ are of the shape $v=(w,\psi)\in W\times \mathsf{cl}$
or $v=(w,\psi,s_C)\in W\times \mathsf{cl}\times\Pi_C$; in the latter case,
we require $\psi$ to be a modality.
\begin{center}
\begin{tabular}{|c|c|c|c|}
\hline
node & owner & moves to & priority \\
\hline
$(w,\top)$ & $\forall$ & $\emptyset$ & $0$\\
$(w,\bot)$ & $\exists$ & $\emptyset$ & $0$\\
$(w,p)$ & $\exists$ & $\{(w,p)\mid w\in \rho(p)\}$ & $0$\\
$(w,\neg p)$ & $\forall$ & $\{(w,\neg p)\mid w\in \rho(p)\}$ & $1$\\
$(w,\varphi\land\psi)$ & $\forall$ & $\{(w,\varphi),(w,\psi)\}$ & $0$\\
$(w,\varphi\lor\psi)$ & $\exists$ & $\{(w,\varphi),(w,\psi)\}$ & $0$\\
$(w,\eta X.\,\psi)$ & $\exists$ & $\{(w,\psi[\eta X.\,\psi/X])\}$ & $\mathsf{ad}(\eta X.\,\psi)$\\
$(w,\CLbox{C}\psi)$ & $\exists$ & $\{(w,\CLbox{C}\psi,s_C)\mid s_C\in\Pi_C(w)\}$ & $0$\\
$(w,\CLdia{C}\psi)$ & $\forall$ & $\{(w,\CLdia{C}\psi,s_C)\mid s_C\in\Pi_C(w)\}$ & $0$\\
$(w,\CLbox{C}\psi,s_C)$ & $\forall$ & $\{(f(w,s_C,s_{\overline{C}}),\psi)\mid s_{\overline{C}}\in \Pi_{\overline{C}}(w)\}$ & $0$\\
$(w,\CLdia{C}\psi,s_C)$ & $\exists$ & $\{(f(w,s_C,s_{\overline{C}}),\psi)\mid s_{\overline{C}}\in \Pi_{\overline{C}}(w)\}$ & $0$\\

\hline
\end{tabular}
\end{center}
\end{defn}
Thus $\gamecgf_{\mathcal{M},\varphi}$ is a parity game with at most $
|W|\times |\mathsf{cl}|\times (|\Pi|+1)$ nodes and at most $\mathsf{ad}(\varphi)$
priorities. We point out that the model checking game
has winning (resp. losing) self loops at nodes $(w,p)$ (resp. $(w,\neg p)$) such that $w\in\rho({p})$.

\begin{lemma}
Let $\mathcal{F}$ be a concurrent game frame with set~$W$ of states,
and let $w\in W$ be a state. Then we have
$w\in\sem{\varphi}_{\mathcal{F}}$ if and only if the existential player wins the node $(w,\varphi)$ in
$\gamecgf_{\mathcal{M},\varphi}$.
\end{lemma}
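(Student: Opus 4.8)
The plan is to read $\gamecgf_{\mathcal{M},\varphi}$ as the evaluation parity game over the closure $\mathsf{cl}$ and to invoke the standard correspondence between its winning region for $\exists$ and Knaster--Tarski fixpoint semantics; because $\varphi$ is closed, $\sem{\varphi}_{\mathcal{F}}$ does not depend on a valuation, and each position $(w,\psi)$ encodes the assertion ``$w\in\sem{\psi}_{\mathcal{F}}$''. Note that a pure structural induction on $\varphi$ is unavailable, since fixpoint unfolding reintroduces the binding formula; the equivalence is instead proven globally, with the two strategy-transfer directions (soundness and completeness) carrying the argument and the boolean and modal positions acting as transparent one-step reductions. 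I would set up the proof so that the only genuinely AMC-specific content --- the modal gadgets --- is verified explicitly, while the fixpoint backbone is handled uniformly.

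For the modal clauses I would unfold the two-layer gadget and check that it realizes the quantifier alternation of the CGF semantics. At $(w,\CLbox{C}\psi)$, owned by $\exists$, a move selects a joint strategy $s_C\in\Pi_C^w$ and passes to the $\forall$-node $(w,\CLbox{C}\psi,s_C)$, from which $\forall$ selects $s_{\overline{C}}\in\Pi_{\overline{C}}^w$ and play continues at $(f(w,s_C,s_{\overline{C}}),\psi)$. Assuming the equivalence already holds at the successor positions $(u,\psi)$, the existential player can guarantee reaching a winning $\psi$-position iff $\exists s_C\in\Pi_C^w.\ \forall s_{\overline{C}}\in\Pi_{\overline{C}}^w.\ f(w,s_C,s_{\overline{C}})\in\sem{\psi}_{\mathcal{F}}$, which is exactly the clause for $\sem{\CLbox{C}\psi}_{\mathcal{F}}$; the case of $\CLdia{C}$ is dual, with the ownerships swapped. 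Since every node in these gadgets has priority $0$, the inserted intermediate positions $(w,\psi,s_C)$ neither create new infinite plays nor alter the dominating priority of any existing one, so they are indeed transparent for the parity condition.

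The propositional backbone is immediate from the ownership assignment together with the valuation-encoding self-loops at $(w,p)$ and $(w,\neg p)$ (with priorities $0$ and $1$). The crux --- and the step I expect to be the main obstacle --- is the fixpoint backbone: the single unfolding move $(w,\eta X.\psi)\to(w,\psi[\eta X.\psi/X])$ of priority $\mathsf{ad}(\eta X.\psi)$ must, via the parity condition, reproduce the least/greatest fixpoint semantics despite the cyclic dependency just noted. I would discharge this by the standard signature (approximant-ordinal) method: for soundness, a winning $\exists$-strategy induces a well-founded descent on the ordinal approximant indices of $\mu$-subformulas along any play, so no least fixpoint can supply the highest priority seen infinitely often; for completeness, membership $w\in\sem{\varphi}_{\mathcal{F}}$ yields such approximant ranks, from which a winning $\exists$-strategy is read off. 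The alternation-depth priorities are precisely what make the outermost fixpoint unfolded infinitely often the semantically decisive one, matching $\nu$ (even, won by $\exists$) against $\mu$ (odd, won by $\forall$). Since the modal gadgets verified above are exactly the one-step evaluation game for the coalition modalities, the entire statement is in fact an instance of the generic correctness of evaluation games for coalgebraic $\mu$-calculi~\cite{DBLP:conf/concur/HausmannS19}, which I would cite for the fixpoint backbone while presenting the modal verification in full.
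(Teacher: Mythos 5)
The paper offers no proof of this lemma to compare against --- it is stated as standard, being an instance of the generic correctness of evaluation/model-checking games for coalgebraic $\mu$-calculi \cite{DBLP:conf/concur/HausmannS19}, which is exactly what your sketch reconstructs (explicit verification of the modal gadgets, whose priority-$0$ nodes are transparent for the parity condition, plus the signature/approximant argument for the fixpoint backbone), so your proposal is correct and matches the justification the paper leaves implicit. One caution on wording: in your ``soundness'' direction the well-founded descent on $\mu$-approximant ordinals is not \emph{induced by} the winning $\exists$-strategy --- approximant ranks come from semantic membership, which is your other direction; the winning-implies-membership direction is standardly discharged either via parity-game determinacy applied to the dual formula or via ordinal ranks defined on the strategy subgraph --- but since you delegate that half to the cited generic result anyway, this is a presentational slip rather than a gap.
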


\begin{defn}[Model checking games, EF semantics]\upshape
  If the model $\mathcal{M}$ is given as an effectivity frame
  $\mathcal{F}=(W,e)$, the model checking game
  $\gameef_{\mathcal{M},\varphi}=(V=V_\exists\cup V_\forall,E,\Omega)$ is
  defined in the same way as for concurrent game frames, but with
  nodes $v=(w,\psi,U)\in W\times\mathsf{cl}\times \Pow(W)$ replacing
  the nodes $(w,\psi,s)\in W\times \mathsf{cl}\times\Pi$ (with~$\psi$
  of the form $\CLbox{C}\psi'$ or $\CLdia{C}\psi'$). The modal moves
  are given by the table below; all other moves, and also node
  ownership and the priority assignments, are as in the CGF model
  checking game given above.
\begin{center}
\begin{tabular}{|c|c|c|c|}
\hline
node & owner & moves to & priority \\
\hline
$(w,\CLbox{C}\psi)$ & $\exists$ & $\{(w,\CLbox{C}\psi,U)\mid U\in e(w,C)\}$ & $0$\\
$(w,\CLdia{C}\psi)$ & $\forall$ & $\{(w,\CLdia{C}\psi,U)\mid U\in e(w,C)\}$ & $0$\\
$(w,\CLbox{C}\psi,U)$ & $\forall$ & $\{(v,\psi)\mid v\in U\}$ & $0$\\
$(w,\CLdia{C}\psi,U)$ & $\exists$ & $\{(v,\psi)\mid v\in U\}$ & $0$\\
\hline
\end{tabular}
\end{center}
\end{defn}
In this case, $\gameef_{\mathcal{M},\varphi}$ is a parity game with at most $
|W|\times |\mathsf{cl}|\times (2^{|W|}+1)$ nodes and at most $\mathsf{ad}(\varphi)$ priorities.

\begin{corollary}
Let $\mathcal{F}$ be an effectivity frame with set~$W$ of states,  and
let $w\in W$ be a state. 
Then we have $w\in\sem{\varphi}_{\mathcal{F}}$ if and only if the existential player wins the node $(w,\varphi)$ in
$\gameef_{\mathcal{M},\varphi}$.
\end{corollary}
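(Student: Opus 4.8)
The plan is to read this as a genuine corollary of the two preceding results: the CGF model checking lemma and \cref{lem:correctness}. The key structural observation is that $\gameef_{\mathcal{M},\varphi}$ and $\gamecgf_{\mathcal{M},\varphi}$ agree \emph{verbatim} on all the ``formula nodes'' $(w,\psi)$ and on every non-modal gadget; they differ only in how the modal step is resolved. So I would first pass from the given effectivity frame to a generating CGF, and then argue that the two games have the same existential winning region on the shared nodes. Concretely, I would invoke playability: for finite~$W$ a playable effectivity frame $\mathcal{F}=(W,e)$ is, by the representation result cited after \cref{def:translation}, the frame induced (in the sense of \cref{def:translation}) by some CGF $\mathcal{F}_0=(W,m,f)$, so that $e(w,C)=\{U_{s_C}\mid s_C\in\Pi^w_C\}$ with $U_{s_C}=\{f(w,s_C,s_{\overline C})\mid s_{\overline C}\in\Pi^w_{\overline C}\}$. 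By \cref{lem:correctness} we have $\sem{\varphi}_{\mathcal{F}}=\sem{\varphi}_{\mathcal{F}_0}$, and by the CGF model checking lemma $w\in\sem{\varphi}_{\mathcal{F}_0}$ iff $\exists$ wins $(w,\varphi)$ in $\gamecgf_{\mathcal{M}_0,\varphi}$. It therefore remains to show that $\exists$ wins $(w,\varphi)$ in $\gameef_{\mathcal{M},\varphi}$ iff $\exists$ wins $(w,\varphi)$ in $\gamecgf_{\mathcal{M}_0,\varphi}$.

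For this game equivalence I would relate the two modal gadgets by a bisimulation that is the identity on formula nodes and, on the intermediate nodes, identifies $(w,\CLbox{C}\psi',s_C)$ in $\gamecgf_{\mathcal{M}_0,\varphi}$ with $(w,\CLbox{C}\psi',U_{s_C})$ in $\gameef_{\mathcal{M},\varphi}$ (and symmetrically for $\CLdia{C}$). Two facts make this work. First, the existential choices at $(w,\CLbox{C}\psi')$ match up: $s_C\mapsto U_{s_C}$ maps $\Pi^w_C$ onto $e(w,C)$, so every CGF move has a corresponding EF move and vice versa. Second, from bisimilar intermediate nodes the universal player reaches literally the same set of formula nodes, since $\{(f(w,s_C,s_{\overline C}),\psi')\mid s_{\overline C}\in\Pi^w_{\overline C}\}=\{(v,\psi')\mid v\in U_{s_C}\}$. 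Ownership is preserved by construction, and all modal and intermediate nodes carry priority~$0$ while formula nodes carry identical priorities in both games. Hence the relation is a game morphism respecting ownership and priorities, and by the standard fact that such a bisimulation preserves parity-game winning regions, bisimilar nodes are won by the same player; in particular the shared node $(w,\varphi)$ is $\exists$-won in one game iff it is $\exists$-won in the other. Chaining the equivalences closes the argument.

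The step I expect to be the main obstacle is the ``bisimulation preserves the winner'' claim, since it requires checking that the parity acceptance condition is respected and not merely the move structure. Here it is in fact benign: the only nodes the bisimulation does not fix are the transient intermediate nodes, which have priority~$0$ and are never revisited within a single modal step, so no new infinite priority behaviour is introduced and the tail priorities of corresponding plays coincide. I would also flag one assumption-level subtlety: the route above needs $\mathcal{F}$ to be playable so that it arises from a CGF. If one wishes to cover \emph{arbitrary} effectivity frames, the cleaner alternative is to redo the direct induction behind the CGF lemma on $\gameef_{\mathcal{M},\varphi}$, verifying at a box node that $\exists$ has a winning first move from $(w,\CLbox{C}\psi')$ exactly when $\exists U\in e(w,C).\,U\subseteq\sem{\psi'}$, which is precisely the effectivity-frame modal clause (and dually for $\CLdia{C}$). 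I would present the corollary via the reduction above and note in a sentence that this second, self-contained route dispenses with playability.
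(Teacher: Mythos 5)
Your proposal is correct, and it follows what the paper's ``Corollary'' labelling implicitly intends: the paper itself gives \emph{no} proof of this statement, but its placement directly after the CGF model checking lemma and \cref{lem:correctness} indicates exactly the derivation you spell out. Your elaboration is sound on all counts: the map $s_C\mapsto U_{s_C}$ is surjective onto $e(w,C)$ by \cref{def:translation}, related intermediate nodes have literally the same successor sets, ownership is preserved, and since both games insert exactly one priority-$0$ intermediate node per modal step, corresponding plays have the same maximal priority occurring infinitely often (and the same stuck positions), so the game bisimulation does preserve winning regions. The one subtlety you flag is genuine and worth flagging: the corollary as stated quantifies over arbitrary effectivity frames, while your reduction route needs $\mathcal{F}$ to be playable (equivalently, induced by some CGF, which the paper's cited representation results guarantee only for playable frames over finite $W$); in the paper's context effectivity frames are implicitly assumed playable, so the reduction suffices there, but your fallback --- rerunning the inductive argument behind the CGF lemma directly on $\gameef_{\mathcal{M},\varphi}$, where the existential choice at $(w,\CLbox{C}\psi')$ unfolds precisely the clause $\exists U\in e(w,C).\,U\subseteq\sem{\psi'}$ --- is the cleaner, fully general proof and is arguably what one should present if the corollary is meant to stand for arbitrary effectivity frames.
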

\noindent We point out that the above reductions to parity games make all transitions in models explicit,
leading to games with a relatively large number of game nodes.
Next, we present an alternative solution method that directly evaluates (sub)formulas over
the state space $W$. This method does not explicitly construct the model checking 
games, and thereby avoids the blowup in state space incurred by the reduction to parity games.
\begin{defn}[One-step evaluation]\label{defn:osfun}
Given sets $V\subseteq W$ and $\mathbf{X}=X_0,\ldots,X_k\subseteq V\times\mathsf{cl}$, we define a monotone function 
$\mathsf{prop}_V:\Pow(V\times\mathsf{cl})^{k+1}\to \Pow(V\times\mathsf{cl})$ 
evaluating propositional operators over argument sets $\mathbf{X}=(X_0,\ldots,X_k)$ by 
\begin{align*}
\mathsf{prop}_V(\mathbf{X})=\,&V\times\{\top\}\cup \{(w,p)\mid w\in \rho(p)\}\cup
\{(w,\neg p)\mid w\notin \rho(p)\}\,\cup\\
& \{(w,\varphi\land\psi)\mid \{(w,\varphi),(w,\psi)\}\subseteq X_0\}\,\cup\\
& \{(w,\varphi\lor\psi)\mid \{(w,\varphi),(w,\psi)\}\cap X_0\neq\emptyset\}\,\cup\\
& \{(w,\eta X.\,\psi)\mid (w,\psi[\eta X.\,\psi/X])\in X_{\mathsf{ad}(\eta X.\,\psi)}\}
\end{align*}
Building on this, we define functions
$f^{\mathsf{cgf}}_V:\Pow(V\times\mathsf{cl})^{k+1}\to \Pow(V\times\mathsf{cl})$ 
and 
$f^{\mathsf{ef}}_V:\Pow(V\times\mathsf{cl})^{k+1}\to \Pow(V\times\mathsf{cl})$ 
that evaluate formulas step by step using CGF semantics and EF semantics,
respectively; formally, we put
\begin{align*}
f^{\mathsf{CGF}}_V(\mathbf{X})=&\, \mathsf{prop}_V(\mathbf{X})\cup\{(w,\CLbox{C}\psi)\mid \exists s_C\in\Pi^w_C.\,\forall
s_{\overline{C}}\in \Pi^w_{\overline{C}}.\,(f(w,s_C,s_{\overline{C}}),\psi)\in X_{0}\}\\
& \,\cup\,\{(w,\CLdia{C}\psi)\mid \forall s_C\in\Pi^w_C.\,\exists
s_{\overline{C}}\in \Pi^w_{\overline{C}}.\,(f(w,s_C,s_{\overline{C}}),\psi)\in X_{0}\}\\
f^{\mathsf{EF}}_V(\mathbf{X})
=&\, \mathsf{prop}_V(\mathbf{X})\cup\{(w,\CLbox{C}\psi)\mid \exists U\in e(w,C).\,(U\times\{\psi\})\subseteq X_0\}\,\cup\\
& \,\{(w,\CLdia{C}\psi)\mid \forall U\in e(w,C).\,(U\times\{\psi\})\cap X_0\neq\emptyset\}
\end{align*}

\end{defn}

\begin{defn}\label{def:fp}
Given $V\subseteq W$ and a monotone function $f:\Pow(V\times\mathsf{cl})^{k+1}\to \Pow(V)\times\mathsf{cl}$, we define the nested fixpoints
\begin{align*}
\mathsf{E}_f &= \eta_k X_k.\,\eta_{k-1} X_{k-1}.\,\ldots.\,\nu X_0. f(X_0,\ldots, X_k)\\
\mathsf{A}_f &= \overline{\eta_k} X_k.\,\overline{\eta_{k-1}} X_{k-1}.\,\ldots.\,\mu X_0. \overline{f}(X_0,\ldots,X_k)
\end{align*}
where $\eta_i=\nu$ for even $i$ and $\eta_i=\mu$ for odd $i$; also, $\overline{\nu}=\mu$
and $\overline{\mu}=\nu$. We use $\overline{f}$ to denote the dual function to $f$.
These functions evaluate parity objectives over the functions $f$ and $\overline{f}$,
intuitively solving the associated model checking game by \emph{inlining} intermediate game nodes
that arise from the necessity to evaluate modalities; crucially, the domain of these fixpoint computations is
$\Pow(V)\times\mathsf{cl}$ rather than the full set of nodes from the model checking game.
\end{defn}

\begin{lemma}[\cite{DBLP:conf/concur/HausmannS19}]
  Let $\mathcal{F}$ be a CGF with set~$W$ of states, and let $w\in W$
  be a state. Then $w\in\sem{\varphi}_{\mathcal{F}}$ if and only if
  $(w,\varphi)\in \mathsf{E}_{f^{\mathsf{CGF}}_W}$; if $\mathcal{F}$
  instead is an effectivity frame, then we have
  $w\in\sem{\varphi}_{\mathcal{F}}$ if and only if
  $(w,\varphi)\in \mathsf{E}_{f_W^{\mathsf{EF}}}$.
\end{lemma}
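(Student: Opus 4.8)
The plan is to reduce the statement to the parity-game characterisations already in place (the game lemma and corollary above, relating $w\in\sem{\varphi}_{\mathcal{F}}$ to~$\exists$ winning $(w,\varphi)$ in $\gamecgf_{\mathcal{M},\varphi}$, resp.\ $\gameef_{\mathcal{M},\varphi}$), by showing that the nested fixpoint $\mathsf{E}_{f^{\mathsf{CGF}}_W}$ computes exactly the set of non-intermediate nodes $(w,\psi)\in W\times\mathsf{cl}$ won by~$\exists$. Concretely, I would proceed in three steps: first, exhibit $f^{\mathsf{CGF}}_W$ as a priority-indexed controllable-predecessor operator for the game; second, invoke the standard fixpoint characterisation of parity-game winning regions to identify its nested fixpoint with the winning region; and third, repeat the argument for the effectivity-frame game using $f^{\mathsf{EF}}_W$.

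For the first step, I would read off from the game table that each non-intermediate node $(w,\psi)$ carries a fixed priority, namely $\mathsf{ad}(\psi)$ if $\psi$ is a fixpoint formula and $0$ otherwise (with the false literals $(w,\neg p)$, $w\in\rho(p)$, carrying the losing priority~$1$), and that $f^{\mathsf{CGF}}_W$ is precisely the operator that, given approximants $X_0,\dots,X_k$, returns those $(w,\psi)$ from which~$\exists$ can force, in one move (or in the inlined two moves for a coalitional modality), a successor lying in the approximant $X_i$ indexed by the node's own priority $i$. The clauses for $\CLbox{C}$ and $\CLdia{C}$ collapse the intermediate nodes $(w,\CLbox{C}\psi,s_C)$; since these carry priority~$0$, inlining them leaves the winner of every play unchanged, and the resulting condition $\exists s_C\,\forall s_{\overline{C}}.\,(f(w,s_C,s_{\overline{C}}),\psi)\in X_0$ is exactly the coalitional predecessor against the priority-$0$ approximant (the innermost $\nu X_0$). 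The terminal cases $\top,\bot,p,\neg p$ are handled directly by $\mathsf{prop}_V$ without reference to any $X_i$, matching the stuck and self-looping terminal positions of the game.

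For the second step, I would appeal to the standard fact that for a parity game with priorities $0,\dots,k$ the winning region of~$\exists$ equals the nested fixpoint $\eta_k X_k.\cdots.\nu X_0.\,g(X_0,\dots,X_k)$, with $\eta_i$ alternating according to the parity of~$i$ and $g$ the controllable-predecessor operator checking each node's successors against the approximant at that node's priority; this is exactly the shape of $\mathsf{E}_{f^{\mathsf{CGF}}_W}$ as identified in the first step. It then follows that $(w,\varphi)\in\mathsf{E}_{f^{\mathsf{CGF}}_W}$ iff~$\exists$ wins $(w,\varphi)$, which by the game characterisation equals $w\in\sem{\varphi}_{\mathcal{F}}$. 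The third step is identical after replacing the coalitional predecessor by its effectivity-frame form $\exists U\in e(w,C).\,(U\times\{\psi\})\subseteq X_0$ (and its dual) and appealing to the corresponding game characterisation; alternatively, for an induced frame one could transfer the result through \cref{lem:correctness}, but the direct argument avoids relying on playability.

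I expect the main obstacle to be the rigorous justification that the alternation-depth routing faithfully reproduces the parity acceptance condition along infinite plays: one must verify that the maximal priority seen infinitely often on a play threading through fixpoint nodes coincides with the level at which the corresponding fixpoint variable is resolved, so that the signature/unfolding induction underlying the fixpoint characterisation of parity games still goes through even though all subformulas share the single product domain $\Pow(W\times\mathsf{cl})$. This priority bookkeeping is the technical heart; the Boolean, modal-inlining, and terminal cases are routine once it is set up correctly.
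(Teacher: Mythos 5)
The first thing to note is that the paper itself contains \emph{no} proof of this lemma: it is imported wholesale from \cite{DBLP:conf/concur/HausmannS19} (where it is established in coalgebraic generality), so there is no in-paper argument to compare yours against. Judged on its own terms, your proposal is sound and follows essentially the route by which such results are obtained in the literature: (i) identify the one-step functions of \cref{defn:osfun} as controllable-predecessor operators, indexed by the priority of the \emph{current} node, for the model checking games of \cref{sec:mc} with the priority-$0$ intermediate modal nodes inlined; (ii) invoke the Emerson--Jurdzi\'nski-style fixpoint characterization of parity-game winning regions to identify the nested fixpoint of \cref{def:fp} with the winning region of~$\exists$; (iii) conclude via the game correctness lemma and corollary stated earlier in \cref{sec:mc}. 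Your justification that inlining is harmless because intermediate nodes carry the minimal priority~$0$ is the right one, and your decision to argue the EF case directly rather than transferring through \cref{lem:correctness} is not just elegant but necessary, since an arbitrary effectivity frame need not be induced by any CGF.

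Two points deserve slightly more care than your sketch gives them, though neither is a genuine gap. First, $\mathsf{prop}_V$ treats literals \emph{unconditionally}, whereas the game realizes them as self-loops and dead ends; so the identification in step (i) is an equivalence of the resulting nested fixpoints (trivial to check, since the self-loop clauses $(w,p)\in X_0$ and $(w,\neg p)\in X_1$ resolve to the unconditional values under the innermost $\nu X_0$ and the $\mu X_1$), not a literal equality of operators. Second, the characterization in step (ii) must be applied to the \emph{inlined} game, whose single steps are the compound $\exists\forall$ (resp.\ $\forall\exists$) moves, after which your winner-preservation observation transfers the statement back to $\gamecgf_{\mathcal{M},\varphi}$ and $\gameef_{\mathcal{M},\varphi}$; applying it to the original games verbatim would leave the intermediate nodes in the fixpoint domain, contradicting the point of \cref{def:fp}. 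Finally, the ``priority bookkeeping'' you flag as the technical heart is exactly the content of the fixpoint characterization of parity games, which is a known theorem and legitimate to cite rather than reprove.
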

The above implies a model checking algorithm that computes the nested
fixpoints in \cref{def:fp} iteratively. This algorithm can be extended
to support \emph{local} model checking, i.e.\ may avoid exploring the
entire model if satisfaction of the target formula in a target state
can be determined early~\cite{DBLP:conf/concur/HausmannS19}.


\section{Implementation}


\subsection{COOL - The Coalgebraic Ontology Logic Solver}

The model checking implementation is set within the framework provided
by COOL / COOL 2 / COOL-MC - the COalgebraic Ontology Logic solver - a coalgebraic
reasoner and model checker for modal fixpoint logics
\cite{DBLP:conf/cade/GorinPSWW14,COOL2,COOLMC}, implemented in OCaml.  We
contribute implementations of the translation from CGFs to effectivity
frames and instantiations of the generic tool to both CGFs and
effectivity frames.  In the COOL model checking framework, models are
represented as functions \(f\colon W \to FW\) where \(F\) is a set
constructor obtained from the grammar
\[F,G ::= M \mid \mathit{id} \mid \Pow F \mid F\times G \mid M\to G\]
where \(M\) ranges over sets.  The application \(FX\) of a set
constructor to a set \(X\) is defined as follows: \(MX = M\),
\(\mathit{id}\, X = X\), \(\Pow FX = \Pow(FX)\),
\((F\times G)\,X = FX\times GX\), \((M\to G)\,X = M\to GX\), where
\(\Pow(FX)\) is the power set of \(FX\) and \(M\to GX\) is the set of
functions with domain~\(M\) and codomain \(GX\). Functions of type
$W\to FW$ are referred to as
\emph{$F$-coalgebras}~\cite{DBLP:journals/tcs/Rutten00} but we refrain from delving into the general theory.
Effectivity frames are then $F$-coalgebras for
\(F = \Pow\Atoms\times ({\Pow(\Agents)}\to \Pow\Pow)\), while
concurrent game frames are $F$-coalgebras for
\(F = \Pow\Atoms \times \mathbb{N}^\Agents \times (\Pi\to id)\). 

As the datatype to represent all types of models,
COOL defines the algebraic datatype \lstinline|functor_element|.
\mycomment{
	\begin{minted}{ocaml}
		type functor_element =
		(* basic set elements *)
		| Identifier of string
		| INT of int
		| RAT of int * int
		(* elements of constructed sets *)
		| Tuple of functor_element list
		| Set of functor_element list
		| Function of (functor_element, functor_element)Hashtbl.t
	\end{minted}
}
\mycomment{
A value of this type represents a model, e.g. an effectivity frame \(f\colon W \to \Pow(\Atoms)\times(\Pow(\Agents)\to\Pow\Pow(W))\) is represented by a
value of type

\begin{minipage}[t]{\textwidth}
\begin{minted}{ocaml}
  Function of ( Id of string, (* state from W *), 
    Tuple of (
      Set of (Id of string), (* Atoms, valid at Inputstate *)
      Function of (
        Set of (INT of int) list, (* P(Ag) coalition *)
        Set of (Set of (Id of string) list) list (* P(P(W)) *)
      )Hashtbl.t
    )
  )Hashtbl.t
\end{minted}
\end{minipage}
}

\subsection{Parity Game Model Checking}

In COOL-MC, the generation of parity games is implemented as a higher order function
that traverses the input model and formula and translates all
connectives into game nodes as described in \cref{sec:mc},
implementing the interpretation of modal operators using a function it
receives as an argument. Each occurrence of a coalitional modality
generates a region consisting of inner nodes; the
remaining nodes are called outer nodes, and form a further
region.  The structure of inner nodes differs depending on the
semantics. Each region has an initial node (either the
occurrence of a modality or the root node of the game). The type for
game nodes consists of one algebraic type encoding the structure
of inner nodes, and a further algebraic type encoding whether a node
is an inner node or not.

The parity game thus created then can be solved using any parity game
solver; COOL-MC, which we employ for the experiments in this work, uses the implementation of
Zielonka's algorithm that is provided by PGSolver~\cite{pgsolvegit}.

\subsection{Local Model Checking}\label{sec:local-mc}


Within the COOL-MC framework, we also implement the
alternative local model checking algorithm described in \cref{sec:mc}, again
realized as 
a higher order function that receives the semantic function for
modalities as an argument,
as in Definition~\ref{defn:osfun}, and then computes
the relevant fixpoints by Kleene approximation; indeed, both coalitional modalities
$\CLbox{C}$ and $\CLdia{C}$ are implemented within the same
function. Recall that the semantics of \(\CLbox{C}\psi\) is given by
\begin{align*}
\exists s_C\in\Pi_C(w).\,\forall s_{\overline{C}}\in
\Pi_{\overline{C}}(w).\,(f(w,s_C,s_{\overline{C}}),\psi)\in X_{0}
\end{align*}
in CGF semantics (using notation introduced in \cref{sec:amc}).  
The quantifiers in this formula are implemented by traversing all
joint moves of~$C$ and~$\overline C$, respectively.

The implementation of the semantics of coalitional modalities for
local model checking over CGFs (see \cref{lst:cgf_local}) enumerates
all possible moves using the recursive function \lstinline|try_out|
until it either has found a witnessing move for the coalition, or it
has proved that there is not such a move.  This is achieved by
advancing the move of the coalition if the reached world with the
\lstinline|curr_move| does not satisfy the argument formula
(\lstinline|x| not contained in \lstinline|argset|).  This correspond
the existential search for the move of the coalition in the concurrent
game structure semantics definition.  When the current move of the
coalition reaches a satisfying world then instead the move of the
opposition \lstinline|d_bar| is advanced using \lstinline|next_move|.
The recursion stops if the individual moves have hit the
\lstinline|move_bounds| so that all relevant moves have been explored.

\begin{listing}[htbp]
	\begin{minted}{ocaml}
let coalition_modal_pred _ form xi_of_x argset argset_dual =
  let open Output in
  let* (box, d) =
    match form.HCFml.node with
    | F.HCENFORCES (d, _) -> return (true, d)
    | F.HCALLOWS (d, _) -> return (false, d)
    | _ -> Error (InvalidInput ( "coalition_modal_pred"))
  in
  let argset = if box then argset else argset_dual in
  begin (* xi_of_x = (moves_1, ..., moves_n, result_fun) *)
    let* tuple = M.fe_tuple_to_list xi_of_x in
    let n = ((length tuple) - 1) in (* number of agents *)
    ...
    let rec try_out curr_move =
      let x = Hashtbl.find result_fun curr_move in
      if mem x argset then
        begin
          let next = next_move_for move_bounds d_bar curr_move in
          if next = curr_move then return true else try_out next
        end
      else
        begin
          let next = next_move_for move_bounds d curr_move in
          if next = curr_move
          then return false
          else try_out (reset d_bar next)
        end
    in
    (* start exploring with move where everyone chooses 1 *)
    try_out (List.init n (fun _ -> 1) ) 
  end ||> fun x -> if box then x else not x
\end{minted}
\vspace{-1em}
	\caption{Semantics of the coalitional modalities for local model checking in concurrent game structure semantics}\label{lst:cgf_local}
\end{listing}

For comparison, in EF semantics, the semantics of $\CLbox{C}\psi$ is
realized by implementing the function
\begin{align*}
\exists U\in e(w,C).\,(U\times\{\psi\})\subseteq X_0.
\end{align*}
\cref{lst:ef_local} shows the new implementation in COOL.  Due to the
generic representation of the models and formulas, a few lines
of code are needed to extract the \lstinline|coalition| and
\lstinline|eff_func| from the model and formula representation.  Then
the \lstinline|capabilities| of the coalition corresponding to the
\(e(w,C)\) in the semantics definition are looked up
(\lstinline|H.find|) in the hash table underlying the effectivity
function.  The subsequent \lstinline|exists| search then corresponds
to picking the set \(U\) in the semantics, and checking containment
of $U$ in \lstinline|argset| corresponds to the clause
\(U \subseteq \sem{\varphi}_\sigma\) in the semantics definition.

\begin{listing}[htbp]
\begin{minted}{ocaml}
let coalition_modal_effectivity_pred _ form xi_of_x argset argset_dual =
  let open Output in
  let* (box, coalition) =
    match form.HCFml.node with
    | F.HCENFORCES (d, _) -> return (true, d)
    | F.HCALLOWS (d, _) -> return (false, d)
    | _ -> Error (InvalidInput ( "coalition_modal_effectivity_pred"))
  in
  let argset = if box then argset else argset_dual in
  begin
    let* eff_func = M.fe_function_to_htbl __LINE__ xi_of_x in
    let  capabilities = H.find eff_func (Set (map M.int coalition)) in
    ...
    return (exists (fun succs -> subset succs argset) capabilities)
  end ||> fun x -> if box then x else not x
\end{minted}
\vspace{-1em}
	\caption{Semantics of the coalitional modalities for local model checking in effectivity frame semantics}\label{lst:ef_local}
\end{listing}

\subsection{Converting CGFs to EFs}

To calculate the induced effectivity function at a given state in a
CGF, we implement the translation procedure described in
\cref{def:translation}.  As the set of states stays the same, the
overall conversion is just a matter of converting the structure at
each state individually.  We define a function
\lstinline|effectivity_function_of_game_form|, which we iterate over all states to
compute the resulting effectivity function (see
\cref{lst:conversion}).
\begin{listing}[htbp]
\begin{minted}{ocaml}
let effectivity_function_of_game_form game_form =
  match game_form with
  | Tuple (* (moves_1, ..., moves_n, result_func)*) ->
     ...
     let effectivity = H.create (number_of_agents) in
     ...
     let move_is_extension grand_move coal coal_move = 
       for_all2 (fun x y -> (nth grand_move (x-1)) = y) coal coal_move
     in
     let result_states coal coal_move = (* reached with partial move *)
       let result_state_for_extension grand_move =
         if move_is_extension grand_move coal coal_move
         then Some (result_func grand_move) else None
       in
       Set (filter_map result_state_for_extension all_moves)
     in
     let moves_for_coalition coal =
       let actions_for_coal = map (fun x ->(*{1,...,moves_x}*)) coal in
       all_combinations actions_for_coal
     in
     let enforced_sets_for_coalition coal =
       Set (map (result_states coal) (moves_for_coalition coal))
     in
     iter (fun coal ->
         H.add effectivity coal (enforced_sets_for_coalition coal)
       ) all_coalitions; Function effectivity
   \end{minted}
   \vspace{-1em}
   \caption{Conversion from CGFs to effectivity frames}\label{lst:conversion}
\end{listing}
As the effectivity function assigns an effectivity set to every coalition, the implementation creates an empty hash table and then iterates over the set \lstinline|all_coalitions| and calls the internal function \lstinline|enforced_sets_for_coalition| to calculate the effectivity.
This is done by calculating all admissible joint moves of the given coalition.
Then the set \lstinline|all_moves| is read off from the domain of \lstinline|result_func| and filtered for each of the partial moves of the coalition to get all grand moves extending the partial move of the coalition.
Mapping the \lstinline|result_func| over the resulting set gives the set of  states reached by the joint move.
After the effectivity functions have been computed, our conversion procedure filters the effectivity functions to only contain minimal sets.
This operation ensures minimal runtime and space usage in all further processing of the resulting effectivity frames.

\section{Experiments}

We conduct various experiments, evaluating the performance of the different 
implementations of model checking for the alternating-time $\mu$-calculus
in comparison with each other, and (on the ATL
fragment) in comparison to the state-of-the-art ATL model checker MCMAS.
In more detail, we run benchmarks on the following implementations.
\begin{itemize}
\item 
Our implementation of concurrent game frame semantics model checking as an instance of COOL. We denote these algorithms by
CGF$_g$ (model checking by parity game reduction, using the implementation of
Zielonka's algorithm provided by PGSolver~\cite{pgsolvegit} to solve the resulting games) and
CGF$_l$ (local model checking), respectively.
\item Our implementation of model checking for effectivity frame semantics as an instance of COOL, writing
EF$_g$ (model checking by parity game reduction) and
EF$_l$ (local model checking), respectively, to denote the pure model checking algorithms;
the combined algorithms that first perform the transformation from CGF to EF and subsequently model check are denoted by EFC$_g$ and EFC$_l$.
\item The symbolic multi-agent system model checker MCMAS (version 1.3.0)~\cite{DBLP:journals/sttt/LomuscioQR17}; in contrast to the implementations within COOL, MCMAS handles additional epistemic modalities, and uses ordered binary decision diagram~(OBDD) representations of models and formulas. On the other hand, MCMAS does not support the full AMC, so that the comparison between our implementations in COOL-MC and MCMAS is
restricted to ATL.
\end{itemize}
\noindent In benchmarks, we use
hyperfine~\cite{Peter_hyperfine_2023} to
average the measured values over at least five executions and set a
timeout of 200 seconds.  All experiments have been executed on a
machine with an AMD Ryzen 7 2700 CPU and 32GB of RAM.  
An artifact containing the source code of our implementations, 
evaluation scripts and benchmarking sets for all experiments 
described below will be made available.

\subsection{Random Formulas over Random Games}
In a first experiment, we evaluate our implementations within COOL on
random AMC formulas of increasing size on random concurrent game
frames, each with~10 states, 4 agents, and either~2 or~10 moves per
agent (due to the differences in model representation, a meaningful
comparison with MCMAS on random models does not appear feasible). 

\subsection{Castle Game}
The \emph{castle game} has been used for benchmarking in previous work
on ATL model
checking~\cite{DBLP:journals/logcom/PileckiBJ17,unboundedatl2021}.
The game is parametrized over the number of castles and the health
points all castles start with.  Each castle has a corresponding knight
that can, in each turn, either be sent out to attack another castle or
stay and defend the castle.  In each turn, all knights decide
concurrently which other castle they want to attack or if they want to
stay at their castle and defend.  A knight who has attacked in one
turn needs to stay and rest in the next turn.  A castle that has its
knight defending it or resting can block one attack.  Each unblocked
attack on a castle reduces that castle's number of health points by
one. When no health points are left, the castle has lost the game and
can no longer attack; this situation is indicated by propositional
atoms $\mathsf{lost}_{a}$, where $a$ is a knight.

For the castle game we check the following AMC formulas 
(which are expressible in ATL) for satisfaction in the initial state. The formula $\nu X.\, \neg\mathsf{lost}_\agent \wedge \CLbox{\{\agent\}}X$
expresses that the knight \(\agent\) has a strategy ensuring that her castle never gets destroyed.
We check this formula for each \(\agent \in \Agents\). Moreover, the formula
\[\textstyle\mu X.\, ((\bigwedge_{\agent \in C}\neg\mathsf{lost}_\agent) \wedge (\bigwedge_{\agent \in \Agents\setminus C}\mathsf{lost}_\agent)) \vee \CLbox{C}X\]
expresses that the coalition \(C\) has a joint strategy to ensure that
all other castles are eventually destroyed while none of the allied
castles (belonging to $C$) are destroyed.  We check this formula for
one coalition of each size.

The castle game has the property that almost none of the joint moves
are equivalent, i.e.\ almost all joint moves lead to a different
outcome.  The specification in MCMAS
uses separated local states of the agents.
\mycomment{
\begin{listing}[htb]
  \begin{lstlisting}
Agent ag1
  Lobsvars = {  };
  Vars: ready : boolean; hp : 0 .. 2; end Vars 
  Actions = { defend, rest, dead, attack_2 }; 
  Protocol:
    ((hp) >= (1) and ready = true) : { defend, attack_2 };
    ((hp) >= (1) and ready = false) : { rest };
    Other : { dead };  end Protocol 
  Evolution:
    hp = (hp) - (1) and ready = false if (Action = attack_2 and
      ((ag2.Action = attack_1 and (hp) >= (1)) or (hp = 1 and
      ag2.Action = attack_1)));
    hp = (hp) - (0) and ready = true if (!(Action = attack_2) and
      ((ag2.Action = attack_1 and (hp) >= (0)) or (hp = 0 and
      ag2.Action = attack_1)));
    ready = false if ((hp) >= (1) and (!(ag2.Action = attack_1) and
      Action = attack_2));
    ready = true if ((hp) >= (1) and (!(ag2.Action = attack_1) and
      !(Action = attack_2))); end Evolution
end Agent
  \end{lstlisting}
\vspace{-20pt}
  \caption{MCMAS encoding of an agent in the two-castle two-health-point game}\label{lst:castles-agent}
\end{listing}
}
The encoding of the castle game with \(n\) castles and \(h\) health points
in COOL uses \((2 \times H)^n\) as state space where \(2 = \{t, f\}\) and \(H = \{x \mid 0 \le x \le h\}\).

\subsection{Modulo Game}
The second game we consider is constructed to showcase the benefit of effectivity frames in cases with  many equivalent joint moves. The \emph{modulo game} is parameterized over the number of agents, the number~$n$ of moves per agent, and a base number for the modulo calculation, which is also the number of states.
Each state is uniquely identified by satisfaction of an atom \(p_i\) that witnesses that the current state corresponds to a sum of \(i\).
  In each turn, the agents concurrently choose a number between one and~$n$.
  The game then moves to the state corresponding to the sum of the played numbers plus the number of the previous state, modulo the base.

For the modulo game, we check the following formulas for satisfaction in the initial state (in which the mentioned sum starts at \(0\)). The formula
\[\varphi_1:=\textstyle\bigwedge_{0 \le i < \mathsf{base}}\mu X.\, p_i \vee \CLbox{C}X\]
expresses that the coalition \(C\) has a joint strategy to reach any state eventually;
this property can be expressed in ATL. On the other hand, the formula
\[\varphi_2:=\nu X.\, \mu Y.\,(X \wedge (p_0 \vee \CLbox{C}Y) \wedge (p_{\mathsf{base} / 2} \vee \CLbox{C}Y))\]
expresses the Büchi property that coalition \(C\) has a strategy to ensure that a sum of \(0\) as well as a sum of \(\mathsf{base} / 2\) occur infinitely often.
We check both formulas for one coalition of each size.
We use the first formula for comparison with MCMAS.
The second formula requires the full AMC and is hence not covered by MCMAS, so it is used only for benchmarking the different COOL implementations. 
We fix 10 as the base throughout.

The (explicit-state) encoding of the modulo game in COOL is
straight-forward, having ten states and the expected transition
function and valuation.

\subsection{Results}

The results of the experiment on \emph{random formulas} are summarized in~\cref{fig:random-bench}.
Superscripts indicate the number of actions available to each of the agents; all models generated have 10 states.
The vertical axis denotes the mean cumulative runtime including model conversion time  for one run, i.e. an average over 25 instances of the model checking problem, each instance  averaged over at least 5 runs. The horizontal axis denotes the size of formulas in terms of the number of connectives.

As indicated in the theoretical discussion of the individual model checking algorithms, inference using EF starts with a runtime offset owed to the conversion from CGF models to EF model.
At small problem sizes, this overhead makes EF-based model checking slower than direct model checking on CGFs; however, the runtime eventually breaks even, with the EF-based algorithm  outscaling the CGF implementation.
In our benchmark parameters, this happens at a formula size near 16 for the local variants,
and for much larger formulas for the game based implementations.

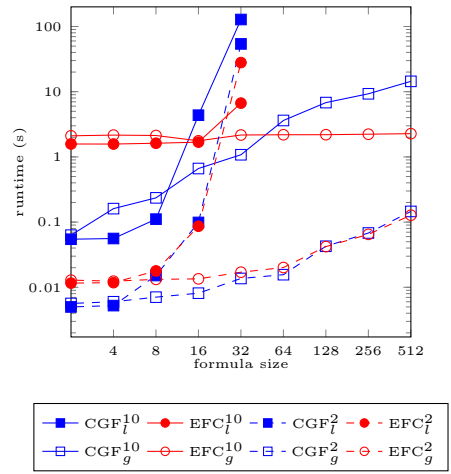
\begin{wrapfigure}{r}{0.5\textwidth}
      \centering
      \begin{tikzpicture}
        \begin{loglogaxis}[
          mark options={solid},
          minor tick num=1,
          xtick distance = 2,
          every axis y label/.style=
          {at={(ticklabel cs:0.5)},rotate=90,anchor=center},
          every axis x label/.style=
          {at={(ticklabel cs:0.5)},anchor=center},
          tiny,
          width=\linewidth,
          height=17em,
          transpose legend,
          legend columns=2,
          legend style={at={(0.5,-0.2)},anchor=north},
          log basis x=2,
          log basis y=10,
          log ticks with fixed point,
          xlabel={formula size},
          ylabel={runtime (s)},
          xmin=2,
          xmax=512,
          ymin=0,
          ymax=200,
          legend entries={CGF$_l^{10}$, CGF$_g^{10}$, EFC$_l^{10}$, EFC$_g^{10}$, CGF$_l^{2}$, CGF$_g^{2}$, EFC$_l^{2}$, EFC$_g^{2}$}]
  
          \addplot[blue, mark=square*] table [col sep=comma, x=parameter, y=mean] {benchmarks/modelsize10states10actionsmergedlocalCGF.csv};
          \addplot[blue, mark=square] table [col sep=comma, x=parameter, y=mean] {benchmarks/modelsize10states10actionsmergedsmallCGF.csv};
          \addplot[red, mark=*] table [col sep=comma, x=parameter, y=mean] {benchmarks/modelsize10states10actionsmergedlocalEF.csv};
          \addplot[red, mark=o] table [col sep=comma, x=parameter, y=mean] {benchmarks/modelsize10states10actionsmergedsmallEF.csv};
          \addplot[blue, mark=square*, dashed] table [col sep=comma, x=parameter, y=mean] {benchmarks/modelsize10states2actionsmergedlocalCGF.csv};
          \addplot[blue, mark=square, dashed] table [col sep=comma, x=parameter, y=mean] {benchmarks/modelsize10states2actionsmergedsmallCGF.csv};
          \addplot[red, mark=*, dashed] table [col sep=comma, x=parameter, y=mean] {benchmarks/modelsize10states2actionsmergedlocalEF.csv};
          \addplot[red, mark=o, dashed] table [col sep=comma, x=parameter, y=mean] {benchmarks/modelsize10states2actionsmergedsmallEF.csv};
  
        \end{loglogaxis}
      \end{tikzpicture}
  \caption{Model checking random formulas} \label{fig:random-bench}
    \end{wrapfigure}

The \emph{conversion times} for the modulo game are shown
in~\cref{fig:conv-times-modulo}. The effort required for converting from a 
concurrent game frame to an effectivity frame grows  with both the
 number of states in the model and  the number of moves.
\mycomment{Increasing the health points in the castles game 
(\cref{fig:conv-times-castle}) has no effect on the size of the transition function.
However, these models have roughly \(\mathsf{max\_hp}^{|\Agents|}\) many 
states so an increase in starting health leads to a large increase in the 
number of states which in turn all have the large transition function attached.}
The modulo game always has exactly 10 states, and increasing the 
number of moves per agent leads to a blowup in the individual transition functions.
The transition function has an entry for every joint move, so its domain has 
size \(\mathsf{moves\_per\_agent}^{|\Agents|}\), leading to the drastic growth 
visible in \cref{fig:conv-times-modulo}.
The figure also shows the expected increase in conversion time when the number
of agents is increased. The behaviour for the castles game is similar but scaled up
due to the larger state space.

\begin{figure}[htbp]\mycomment{
  \begin{minipage}[t]{.5\linewidth}
    \centering
    \begin{tikzpicture}
      \begin{semilogyaxis}[
        mark options={solid},
        minor tick num=1,
        xtick distance = 2,
        every axis y label/.style=
        {at={(ticklabel cs:0.5)},rotate=90,anchor=center},
        every axis x label/.style=
        {at={(ticklabel cs:0.5)},anchor=center},
        tiny,
        width=\linewidth,
        height=17em,
        transpose legend,
        legend columns=2,
        legend style={at={(0.5,-0.2)},anchor=north},
        ymode=log,
        log ticks with fixed point,
        xlabel={starting health points},
        ylabel={runtime (s)},
        xmin=1,
        xmax=10,
        ymin=0,
        ymax=100,
        legend entries={$2$ castles, $3$ castles, $4$ castles, $5$ castles}]

        \addplot table [col sep=comma, x=parameter_hp, y=mean] {benchmarks/castles-conversions-2.csv};
        \addplot table [col sep=comma, x=parameter_hp, y=mean] {benchmarks/castles-conversions-3.csv};
        \addplot table [col sep=comma, x=parameter_hp, y=mean] {benchmarks/castles-conversions-4.csv};
        \addplot table [col sep=comma, x=parameter_hp, y=mean] {benchmarks/castles-conversions-5.csv};
      \end{semilogyaxis}
    \end{tikzpicture}
    \caption{Castle game conversion time}\label{fig:conv-times-castle}
  \end{minipage}%
  \,\,\,}
  \begin{minipage}[t]{.5\linewidth}
    \centering
    \begin{tikzpicture}
      \begin{semilogyaxis}[
        mark options={solid},
        minor tick num=1,
        xtick distance = 2,
        every axis y label/.style=
        {at={(ticklabel cs:0.5)},rotate=90,anchor=center},
        every axis x label/.style=
        {at={(ticklabel cs:0.5)},anchor=center},
        tiny,
        width=\linewidth,
        height=17em,
        transpose legend,
        legend columns=2,
        legend style={at={(0.5,-0.2)},anchor=north},
        ymode=log,
        log ticks with fixed point,
        xlabel={number of moves per agent},
        ylabel={runtime (s)},
        xmin=2,
        xmax=10,
        ymin=0,
        ymax=100,
        legend entries={$2$ agents, $3$ agents, $4$ agents, $5$ agents}]

        \addplot table [col sep=comma, x=parameter_moves, y=mean] {benchmarks/modulos-conversions-2.csv};
        \addplot table [col sep=comma, x=parameter_moves, y=mean] {benchmarks/modulos-conversions-3.csv};
        \addplot table [col sep=comma, x=parameter_moves, y=mean] {benchmarks/modulos-conversions-4.csv};
        \addplot table [col sep=comma, x=parameter_moves, y=mean] {benchmarks/modulos-conversions-5.csv};
      \end{semilogyaxis}
    \end{tikzpicture}
    \caption{Modulo game conversion time}\label{fig:conv-times-modulo}
  \end{minipage}\mycomment{
\end{figure}
\vspace{-3em}
\begin{figure}[htbp]
  \begin{minipage}[t]{.5\linewidth}
    \centering
    \begin{tikzpicture}
      \begin{semilogyaxis}[
        mark options={solid},
        minor tick num=1,
        xtick distance = 2,
        every axis y label/.style=
        {at={(ticklabel cs:0.5)},rotate=90,anchor=center},
        every axis x label/.style=
        {at={(ticklabel cs:0.5)},anchor=center},
        tiny,
        width=\linewidth,
        height=17em,
        transpose legend,
        legend columns=2,
        legend style={at={(0.5,-0.2)},anchor=north},
        ymode=log,
        log ticks with fixed point,
        xlabel={starting health points},
        ylabel={runtime (s)},
        xmin=1,
        xmax=10,
        ymin=0,
        ymax=20,
        legend entries={MCMAS, CGF$_l$, CGF$_g$, EF$_l$, EF$_g$, EFC$_g$}]

        \addplot[mark=diamond*] table [col sep=comma, x=parameter_hp, y=mean] {benchmarks/castlesMCMAS-2.csv};
        \addplot[mark=square*, blue] table [col sep=comma, x=parameter_hp, y=mean] {benchmarks/castlesCOOL-2-CL-local.csv};
        \addplot[mark=square, blue] table [col sep=comma, x=parameter_hp, y=mean] {benchmarks/castlesCOOL-2-CL-small.csv};
        \addplot[mark=*, red] table [col sep=comma, x=parameter_hp, y=mean] {benchmarks/castlesCOOL-2-CLEF-local.csv};
        \addplot[dashed, mark=o, red] table [col sep=comma, x=parameter_hp, y=mean] {benchmarks/castlesCOOL-2-CLEF-small.csv};
        \addplot[mark=oplus, red] table [col sep=comma, x=parameter_hp, y=mean] {benchmarks/total-castlesCOOL-2-CLEF-small.csv};
      \end{semilogyaxis}
    \end{tikzpicture}
    \caption{Castle game runtime (2 castles)}\label{fig:bench-castles-2}
  \end{minipage}
  \,\,\,}
  \begin{minipage}[t]{.5\linewidth}
    \centering
    \begin{tikzpicture}
      \begin{semilogyaxis}[
        mark options={solid},
        minor tick num=1,
        xtick distance = 2,
        every axis y label/.style=
        {at={(ticklabel cs:0.5)},rotate=90,anchor=center},
        every axis x label/.style=
        {at={(ticklabel cs:0.5)},anchor=center},
        tiny,
        width=\linewidth,
        height=17em,
        transpose legend,
        legend columns=2,
        legend style={at={(0.5,-0.2)},anchor=north},
        ymode=log,
        log ticks with fixed point,
        xlabel={starting health points},
        ylabel={runtime (s)},
        xmin=1,
        xmax=10,
        ymin=0,
        ymax=200,
        legend entries={MCMAS, CGF$_l$, CGF$_g$, EF$_l$, EF$_g$, EFC$_g$}]

        \addplot[mark=diamond*] table [col sep=comma, x=parameter_hp, y=mean] {benchmarks/castlesMCMAS-4.csv};
        \addplot[mark=square*, blue] table [col sep=comma, x=parameter_hp, y=mean] {benchmarks/castlesCOOL-4-CL-local.csv};
        \addplot[mark=square, blue] table [col sep=comma, x=parameter_hp, y=mean] {benchmarks/castlesCOOL-4-CL-small.csv};
        \addplot[mark=*, red] table [col sep=comma, x=parameter_hp, y=mean] {benchmarks/castlesCOOL-4-CLEF-local.csv};
        \addplot[dashed, mark=o, red] table [col sep=comma, x=parameter_hp, y=mean] {benchmarks/castlesCOOL-4-CLEF-small.csv};
        \addplot[mark=oplus, red] table [col sep=comma, x=parameter_hp, y=mean] {benchmarks/total-castlesCOOL-4-CLEF-small.csv};
      \end{semilogyaxis}
    \end{tikzpicture}
    \caption{Castle game runtime (4 castles)}\label{fig:bench-castles-4}
  \end{minipage}
\vspace{-10pt}

\end{figure}
The average runtime results for evaluating all
described formulas over the \emph{castles game} models are depicted in
\cref{fig:bench-castles-4}; they appear to show
that MCMAS consistently outperforms our implementations in COOL.  Both
tools have similar performance characteristics with respect to
increasing the starting health points affecting the number of states
in the game and also the number of turns necessary to defeat a castle.
The local variants of the COOL model checking perform significantly
worse than the game-based variants on both the CGF and the EF
semantics, possibly due to better scaling of Zielonka's algorithm as
provided by PGSolver in comparison with the naive fixpoint
approximation implemented in COOL.  It is also notable here that while
the model checking on effectivity frames generally performs better --
as seen best in \cref{fig:bench-castles-4} -- the cost of conversion 
outweighs this speedup at this
model size.  If a large number of formulas is to be evaluated on the
same model, the cost of conversion is expected to amortize due to the
slight speedup and might make this approach faster in the end.  One
reason why effectivity frames do not result in more speedup in the castle game 
is presumably that almost no moves are equivalent, so the
resulting EFs are comparatively large.  
\begin{figure}[ht]\mycomment{
  \begin{minipage}[t]{.48\linewidth}
    \centering
    \begin{tikzpicture}
      \begin{semilogyaxis}[
        mark options={solid},
        minor tick num=1,
        xtick distance = 2,
        every axis y label/.style=
        {at={(ticklabel cs:0.5)},rotate=90,anchor=center},
        every axis x label/.style=
        {at={(ticklabel cs:0.5)},anchor=center},
        tiny,
        width=\linewidth,
        height=17em,
        transpose legend,
        legend columns=2,
        legend style={at={(0.5,-0.2)},anchor=north},
        ymode=log,
        log ticks with fixed point,
        xlabel={number of moves},
        ylabel={runtime (s)},
        xmin=2,
        xmax=10,
        ymin=0,
        ymax=1,
        legend entries={MCMAS, CL$_l$, CL$_g$, EF$_l$, EF$_g$, EFC$_g$}]

        \addplot[mark=diamond*] table [col sep=comma, x=parameter_moves, y=mean] {benchmarks/modulosMCMAS-2.csv};
        \addplot[mark=square*, blue] table [col sep=comma, x=parameter_moves, y=mean] {benchmarks/modulosCOOL-2-CL-local.csv};
        \addplot[mark=square, blue] table [col sep=comma, x=parameter_moves, y=mean] {benchmarks/modulosCOOL-2-CL-small.csv};
        \addplot{mark=*, red} table [col sep=comma, x=parameter_moves, y=mean] {benchmarks/modulosCOOL-2-CLEF-local.csv};
        \addplot[dashed, mark=o, red] table [col sep=comma, x=parameter_moves, y=mean] {benchmarks/modulosCOOL-2-CLEF-small.csv};
        \addplot[mark=oplus, red] table [col sep=comma, x=parameter_moves, y=mean] {benchmarks/total-modulosCOOL-2-CLEF-small.csv};
      \end{semilogyaxis}
    \end{tikzpicture}
    \caption{Modulo game ($\varphi_1$, 2 agents)}\label{fig:bench-modulo-2}
  \end{minipage}}
   \begin{minipage}[t]{.48\linewidth}
    \centering
    \begin{tikzpicture}
      \begin{semilogyaxis}[
        mark options={solid},
        minor tick num=1,
        xtick distance = 2,
        every axis y label/.style=
        {at={(ticklabel cs:0.5)},rotate=90,anchor=center},
        every axis x label/.style=
        {at={(ticklabel cs:0.5)},anchor=center},
        tiny,
        width=\linewidth,
        height=17em,
        transpose legend,
        legend columns=2,
        legend style={at={(0.5,-0.2)},anchor=north},
        ymode=log,
        log ticks with fixed point,
        xlabel={number of moves},
        ylabel={runtime (s)},
        xmin=2,
        xmax=10,
        ymin=0,
        ymax=100,
        legend entries={MCMAS, CGF$_l$, CGF$_g$, EF$_l$, EF$_g$, EFC$_g$}]

        \addplot[mark=diamond*] table [col sep=comma, x=parameter_moves, y=mean] {benchmarks/modulosMCMAS-4.csv};
        \addplot[mark=square*, blue] table [col sep=comma, x=parameter_moves, y=mean] {benchmarks/modulosCOOL-4-CL-local.csv};
        \addplot[mark=square, blue] table [col sep=comma, x=parameter_moves, y=mean] {benchmarks/modulosCOOL-4-CL-small.csv};
        \addplot[mark=*, red] table [col sep=comma, x=parameter_moves, y=mean] {benchmarks/modulosCOOL-4-CLEF-local.csv};
        \addplot[dashed, mark=o, red] table [col sep=comma, x=parameter_moves, y=mean] {benchmarks/modulosCOOL-4-CLEF-small.csv};
        \addplot[mark=oplus, red] table [col sep=comma, x=parameter_moves, y=mean] {benchmarks/total-modulosCOOL-4-CLEF-small.csv};
      \end{semilogyaxis}
    \end{tikzpicture}
    \caption{Modulo game ($\varphi_1$, 4 agents)}\label{fig:bench-modulo-4}
  \end{minipage}
  \;
  \begin{minipage}[t]{.49\linewidth}
    \centering
    \begin{tikzpicture}
      \begin{semilogyaxis}[
        mark options={solid},
        minor tick num=1,
        xtick distance = 2,
        every axis y label/.style=
        {at={(ticklabel cs:0.5)},rotate=90,anchor=center},
        every axis x label/.style=
        {at={(ticklabel cs:0.5)},anchor=center},
        tiny,
        width=\linewidth,
        height=17em,
        transpose legend,
        legend columns=2,
        legend style={at={(0.5,-0.2)},anchor=north},
        ymode=log,
        log ticks with fixed point,
        xlabel={number of moves},
        ylabel={runtime (s)},
        xmin=2,
        xmax=10,
        ymin=0,
        ymax=100,
        legend entries={CGF$_l$, CGF$_g$, EF$_l$, EF$_g$, EFC$_g$}]

        \addplot[mark=square*, blue] table [col sep=comma, x=parameter_moves, y=mean] {benchmarks/modulosCOOL-4-CL-local-mu.csv};
        \addplot[mark=square, blue] table [col sep=comma, x=parameter_moves, y=mean] {benchmarks/modulosCOOL-4-CL-small-mu.csv};
        \addplot[mark=*, red] table [col sep=comma, x=parameter_moves, y=mean] {benchmarks/modulosCOOL-4-CLEF-local-mu.csv};
        \addplot[dashed, mark=o, red] table [col sep=comma, x=parameter_moves, y=mean] {benchmarks/modulosCOOL-4-CLEF-small-mu.csv};
        \addplot[mark=oplus, red] table [col sep=comma, x=parameter_moves, y=mean] {benchmarks/total-modulosCOOL-4-CLEF-small-mu.csv};
      \end{semilogyaxis}
    \end{tikzpicture}
    \caption{Modulo game ($\varphi_2$, 4 agents)}\label{fig:bench-modulo-mu-4}
  \end{minipage}
\end{figure}
On the other hand,  EF-based model checking performs very well when model checking
 the \emph{modulo games} against the formula 
$\varphi_1$ (\cref{fig:bench-modulo-4}).
The runtime of the EF-based semantics is essentially constant, while the runtime 
under CGF semantics increases with the number of moves per agent.
The constant runtime under EF semantics is as expected, since any coalition has a constant number of non-equivalent joint moves.
The effectivity function hence scales mainly with the number of agents and only 
up to a fixed limit w.r.t.\ the number of moves per agent or the size of the 
coalition.
The two EF implementations in COOL even outperform 
MCMAS with increasing number of moves and agents.
Note however, that the conversion time increases sharply with increasing model 
size (similar to~\cref{fig:conv-times-modulo}), nivellating to some extent the 
performance gain when the system is not initially specified as effectivity frame.
 Spikes in the plots appear
to be due to symmetry in the model structure.
\mycomment{
\begin{figure}[htbp]
  \begin{minipage}[t]{.48\linewidth}
    \centering
    \begin{tikzpicture}
      \begin{semilogyaxis}[
        mark options={solid},
        minor tick num=1,
        xtick distance = 2,
        every axis y label/.style=
        {at={(ticklabel cs:0.5)},rotate=90,anchor=center},
        every axis x label/.style=
        {at={(ticklabel cs:0.5)},anchor=center},
        tiny,
        width=\linewidth,
        height=17em,
        transpose legend,
        legend columns=2,
        legend style={at={(0.5,-0.2)},anchor=north},
        ymode=log,
        log ticks with fixed point,
        xlabel={number of moves},
        ylabel={runtime (s)},
        xmin=2,
        xmax=10,
        ymin=0,
        ymax=0.2,
        legend entries={CL$_l$, CL$_g$, EF$_l$, EF$_g$, EFC$_g$}]

        \addplot table [col sep=comma, x=parameter_moves, y=mean] {benchmarks/modulosCOOL-2-CL-local-mu.csv};
        \addplot table [col sep=comma, x=parameter_moves, y=mean] {benchmarks/modulosCOOL-2-CL-small-mu.csv};
        \addplot table [col sep=comma, x=parameter_moves, y=mean] {benchmarks/modulosCOOL-2-CLEF-local-mu.csv};
        \addplot table [col sep=comma, x=parameter_moves, y=mean] {benchmarks/modulosCOOL-2-CLEF-small-mu.csv};
        \addplot table [col sep=comma, x=parameter_moves, y=mean] {benchmarks/total-modulosCOOL-2-CLEF-small-mu.csv};
      \end{semilogyaxis}
    \end{tikzpicture}
    \caption{Modulo game ($\varphi_2$, 2 Agents)}\label{fig:bench-modulo-mu-2}
  \end{minipage}%
  \;
   \begin{minipage}[t]{.49\linewidth}
    \centering
    \begin{tikzpicture}
      \begin{semilogyaxis}[
        mark options={solid},
        minor tick num=1,
        xtick distance = 2,
        every axis y label/.style=
        {at={(ticklabel cs:0.5)},rotate=90,anchor=center},
        every axis x label/.style=
        {at={(ticklabel cs:0.5)},anchor=center},
        tiny,
        width=\linewidth,
        height=17em,
        transpose legend,
        legend columns=2,
        legend style={at={(0.5,-0.2)},anchor=north},
        ymode=log,
        log ticks with fixed point,
        xlabel={number of moves},
        ylabel={runtime (s)},
        xmin=2,
        xmax=10,
        ymin=0,
        ymax=100,
        legend entries={CL$_l$, CL$_g$, EF$_l$, EF$_g$, EFC$_g$}]

        \addplot table [col sep=comma, x=parameter_moves, y=mean] {benchmarks/modulosCOOL-4-CL-local-mu.csv};
        \addplot table [col sep=comma, x=parameter_moves, y=mean] {benchmarks/modulosCOOL-4-CL-small-mu.csv};
        \addplot table [col sep=comma, x=parameter_moves, y=mean] {benchmarks/modulosCOOL-4-CLEF-local-mu.csv};
        \addplot table [col sep=comma, x=parameter_moves, y=mean] {benchmarks/modulosCOOL-4-CLEF-small-mu.csv};
        \addplot table [col sep=comma, x=parameter_moves, y=mean] {benchmarks/total-modulosCOOL-4-CLEF-small-mu.csv};
      \end{semilogyaxis}
    \end{tikzpicture}
    \caption{Modulo game ($\varphi_2$, 4 agents)}\label{fig:bench-modulo-mu-4}
  \end{minipage}
\end{figure}
}
In \cref{fig:bench-modulo-mu-4}, the four COOL implementations are compared on the modulo game with the Büchi property benchmark formula~$\varphi_2$.
As expected, the effectivity-frame-based implementations have roughly constant performance while the concurrent game frame implementations slow down drastically with increased number of moves per agent.

The three benchmarking sets we evaluate all vary different
parameters.  Generally, effectivity frame semantics
plays out its
advantages when there are many equivalent joint moves of a coalition,
as showcased by the constant performance in the modulo game
(e.g.~\cref{fig:bench-modulo-4}).  Compared to model checking random formulas
on concurrent game frames directly, there is a speedup for large input formulas as
seen in~\cref{fig:random-bench}. The BDD-based approach of
MCMAS outperforms COOL on examples with few equivalent joint moves
(see~\cref{fig:bench-castles-4}).  On smaller problems, the
speedup of the EF semantics is diminished by the initial cost of
conversion when the problem is not specified as effectivity frame
directly (see~\cref{fig:conv-times-modulo}).  We note however that
COOL is the only model checker able to handle the full AMC where the
speedup is potentially multiplied due to unfolding of the fixpoints
increasing the effective formula size (see~\cref{fig:bench-modulo-mu-4}).

\section{Conclusions and Future Work}

\noindent We have analysed efficiency gains in model checking for the
alternating-time $\mu$-calculus afforded by converting concurrent game
frames into effectivity frames in a preprocessing step. We have
 evaluated this method in comparison both with a direct
implementation of model checking on concurrent game structures within
the same overall framework and with the state-of-the-art ATL
symbolic model checker MCMAS. Results show favourable performance of
the preprocessing method on large systems that do not have succinct
symbolic representations.

A large part of the computational cost of our method lies in the
preprocessing, so future development will focus on
optimizations of the conversion. In particular, since the conversion
is entirely per-state, there is potential for
parallelization. Moreover, further performance gains might be achievable by converting from concurrent game
frames to effectivity functions by-need, that is, only when a modality
is actually being evaluated on the state at hand.




%
%
%
\bibliographystyle{splncs04}
\bibliography{refs}
\newpage
\end{document}